\title{\texorpdfstring{From Rewrite Rules to Axioms \\ in the $\lambda\Pi$-Calculus Modulo Theory}{From Rewrite Rules to Axioms in the LambdaPi-Calculus Modulo Theory}}
\author{Valentin Blot\inst{1} \and
Gilles Dowek\inst{1}\orcidlink{0000-0001-6253-935X} \and
Thomas Traversié\inst{1,2} \Letter \and
Théo Winterhalter\inst{1}\orcidlink{0000-0002-9881-3696}}
\authorrunning{V. Blot et al.}
\institute{Université Paris-Saclay, Inria, ENS Paris-Saclay, CNRS, LMF, Gif-sur-Yvette, France
\email{\{valentin.blot,gilles.dowek,thomas.traversie,theo.winterhalter\}@inria.fr} \and
Université Paris-Saclay, CentraleSupélec, MICS, Gif-sur-Yvette, France}
\def\ra{\rightarrow}
\def\lra{\hookrightarrow}
\def\T{\mathcal{T}}
\def\S{\mathcal{S}}
\def\P{\mathcal{P}}
\def\E{\mathcal{E}}
\def\Type{\mbox{\tt TYPE}}
\def\Kind{\mbox{\tt KIND}}
\def\Set{{\it Set}}
\def\El{{\it El}}
\def\Prf{{\it Prf}}
\def\imp{\mathbin{\Rightarrow}}
\def\impd{\mathbin{\Rightarrow_d}}
\def\arrd{\mathbin{\rightsquigarrow_d}}
\def\o{o}
\def\blpi{\pi}
\def\fa{\forall}
\def\trad{\triangleleft}
\def\vdashr{\vdash_{\mathcal{R}}}
\def\heq#1#2#3#4{#1 ~{}_{#2}{\approx}_{#4} ~#3}
\def\pack#1#2{#1 \star #2}
\def\refl{\mathsf{refl}}
\def\trans{\mathsf{trans}}
\def\sym{\mathsf{sym}}
\def\leibp{\mathsf{leib}^\mathsf{Prf}}
\def\eqleibp{\mathsf{eqLeib}^\mathsf{Prf}}
\def\app{\mathsf{app}}
\def\fun{\mathsf{fun}}
\def\prod{\mathsf{prod}}
\def\transp{\mathsf{transp}}
\def\eqtrans{\mathsf{eqTransp}}
\def\succ{\mathsf{succ}}
\def\nat{\mathsf{nat}}
\def\lis{\mathsf{list}}
\def\nil{\mathsf{nil}}
\def\cons{\mathsf{cons}}
\def\concat{\mathsf{concat}}
\def\isrev{\mathsf{isRev}}
\def\graph{\mathsf{graph}}
\def\node{\mathsf{node}}
\def\root{\mathsf{root}}
\def\true{\mathsf{True}}
\def\eqt#1#2{\kappa(#1,#2)} 
\def\judg#1#2#3#4#5#6{#1 \vdash #2 : #3 \in \llbracket #4 \vdashr #5 : #6 \rrbracket}
\def\judgcont#1#2{\vdash #1 \in \llbracket \vdashr #2 \rrbracket}
\newcommand{\overbar}[1]{\mkern 1.5mu\overline{\mkern-1.5mu#1\mkern-1.5mu}\mkern 1.5mu}
\newcommand{\lpc}{$\lambda \Pi$-calculus\xspace}
\newcommand{\lpcm}{$\lambda \Pi$-calculus modulo theory\xspace}
\newcommand{\lpm}{$\lambda\Pi\slash{\equiv}$\xspace}
\newcommand{\ie}{\emph{i.e.}\xspace}
\newcommand{\al}{\emph{et al.}\xspace}
\begin{document}

\maketitle

\begin{abstract}
The \lpcm is an extension of simply typed $\lambda$-calculus with dependent types and user-defined rewrite rules. We show that it is possible to replace the rewrite rules of a theory of the \lpcm by equational axioms, when this theory features the notions of proposition and proof, while maintaining the same expressiveness. To do so, we introduce in the target theory a heterogeneous equality, and we build a translation that replaces each use of the conversion rule by the insertion of a transport. At the end, the theory with rewrite rules is a conservative extension of the theory with axioms.

\keywords{Rewrite rules \and Equality \and Logical Framework.}
\end{abstract}

\section{Introduction}

For Poincaré, the reasoning by which we deduce that $2 + 2 = 4$ is not a meaningful proof, but a simple verification. He concludes that the goal of exact sciences is to ``dispense with these direct verifications''~\cite{poincare}. Far from being solely a philosophical issue, this principle impacts the foundations of logical systems and in particular the choice between \emph{axioms} and \emph{rewrite rules}. For instance, in systems with axioms $x + \succ\ y = \succ\ (x + y)$ and $x + 0 = x$, we can \emph{prove} that $2 + 2 = 4$. On the other hand, in systems with rewrite rules $x + \succ\ y \lra \succ\ (x + y)$ and $x + 0 \lra x$, we just need to prove $4 = 4$ as we can \emph{compute} that $(2 + 2 = 4) \equiv (4 = 4)$. In that respect, logical systems with computation rules are convenient tools for making proofs. That is why rewrite rules have been added to systems such as \textsc{Agda}~\cite{agda} or \textsc{Coq}~\cite{coqrules} and why Dowek~\cite{dowekHDR,deductionmod} developed Deduction modulo theory, an extension of first-order logic that mixes computation and proof. Since logical systems with rewrite rules are more user-friendly, one may ask whether or not the results are the same as in axiomatic logical systems.

Rewrite rules are at the core of the \lpcm, an extension of simply typed $\lambda$-calculus with dependent types and user-definable rewrite rules~\cite{lambdapi}. The combination of $\beta$-reduction and of the rewrite rules of a signature $\Sigma$ forms the conversion $\equiv_{\beta\Sigma}$. If we know that $t : A$ with conversion $A \equiv_{\beta\Sigma} B$, then we can derive that $t : B$. In this system, a theory is a set of rewrite rules, together with a set of axioms (that are typed constants). The \lpcm is a powerful logical framework in which many theories can be expressed, such as Predicate logic, Simple type theory or the Calculus of constructions~\cite{theoryU}. It is the theory behind the \textsc{Dedukti} language~\cite{expressing,deduktiengine} and the \textsc{Lambdapi} proof assistant.

In this paper, we choose to study the replacement of rewrite rules by axioms in the \lpcm. Since it is a logical framework, the result applies to many theories. Moreover, as \textsc{Dedukti} is geared towards the interoperability between proof systems, if we want to exchange proofs between a system with rewrite rules and a system without rewrite rules \emph{via} \textsc{Dedukti}, we need to replace rewrite rules by axioms in the \lpcm. Working in this logical framework rather than in an extension of Martin-Löf type theory~\cite{martinlof} is therefore relevant on both theoretical and practical levels, but complicates the task as the \lpcm does not feature identity types or an infinite hierarchy of sorts.

One method to replace rewrite rules by axioms is to mimic the behavior of the conversion rule using transports: if we have $t : A$ and $A \equiv_{\beta\Sigma} B$ with $p$ an equality between $A$ and $B$, then we can deduce that $\transp ~p ~t : B$, but we do not directly have $t : B$. However trivial this seems, we face several challenges when trying to demonstrate it fully: the insertion of transports in terms and types is difficult due to the presence of dependent types, and the building of transports is involved as we cannot have inside the \lpcm an equality between types.

A similar problem is the elimination of equality reflection from extensional systems. Equality reflection states that $\ell = r$ implies $\ell \equiv r$, just like $\ell \lra r$ implies $\ell \equiv r$ in systems with rewrite rules. In extensional systems, typing is eased by a more powerful conversion. Hofmann~\cite{hofmann95,hofmann97} investigated categorically the problem. Oury~\cite{oury} developed a translation of proofs from an extensional version of the Calculus of Constructions to the Calculus of Inductive Constructions with equality axioms. Winterhalter, Sozeau and Tabareau~\cite{transport,WinterhalterFormalMetaType} built upon this result to reduce the number of axioms needed.

The replacement of rewrite rules by axioms paves the way for the interpretation of a theory into another inside the \lpcm. Indeed, when interpreting a theory into another, we represent each constant of the source theory by a term in the target theory, but we cannot generally do the same for rewrite rules. We can however pre-process the source theory to replace its rewrite rules by axioms, and then interpret it. The interpretation of theories allows to prove relative consistency and relative normalization theorems~\cite{realizmod}.

\paragraph{Contribution.} The main contribution of this paper is the translation of a theory with rewrite rules to a theory with equational axioms. To do so, we restrict the theories considered to theories with an encoding of the notions of proposition and proof inside the \lpcm. So as to compare objects that possibly do not have the same type, we define a heterogeneous equality---following the one defined by McBride~\cite{mcbride}. The restriction considered allows us to build an equality between particular types---called small types. We define a type system with typed conversion for the \lpcm, so that the proofs are done by induction on the derivation trees more easily.

\paragraph{Outline of the paper.} In \cref{part_lpcm}, we present the \lpcm, we detail a prelude encoding of the notions of proposition and proof in it, and we identify the assumptions made on the considered theories. The heterogeneous equality and the equality between small types are presented in \cref{part_eq}. The replacement of rewrite rules by axioms and the translation of terms, judgments and theories are presented in \cref{part_trans}.

\section{Theories in the \texorpdfstring{$\lambda\Pi$}{lambdaPi}-Calculus Modulo Theory}
\label{part_lpcm}

In this section, we give a more detailed overview of the \lpcm~\cite{lambdapi} and its type system. In particular, we present an encoding of the notions of proposition and proof in the \lpcm~\cite{theoryU}. We characterize small types---a subclass of types for which we can define an equality.

\subsection{The \texorpdfstring{$\lambda\Pi$}{lambdaPi}-Calculus Modulo Theory}

The \lpc, also known as the Edinburgh Logical Framework~\cite{LF}, is an extension of simply typed $\lambda$-calculus with dependent types. The \lpcm (\lpm)~\cite{lambdapi} is an extension of the \lpc, in which user-definable rewrite rules have been added~\cite{rewriteSystem}. Its syntax is given by:
\begin{align*}
&Sorts &&s \Coloneqq \Type ~|~ \Kind \\
&Terms &&t,u, A, B \Coloneqq c ~|~ x ~|~ s ~|~ \Pi x : A. ~B ~|~ \lambda x : A. ~t ~|~ t ~u \\
&Contexts &&\Gamma \Coloneqq \langle \rangle ~|~ \Gamma, x : C \\
&Signatures &&\Sigma \Coloneqq \langle \rangle ~|~ \Sigma, c : D ~|~ \Sigma, \ell \lra r
\end{align*}
where $c$ is a constant and $x$ is a variable (ranging over disjoint sets), $C$ and $r$ are terms, $D$ is a closed term (\ie a term with no free variables) and $\ell$ is a term such that $\ell = c ~t_1 \ldots t_k$ with $c$ a constant. $\Type$ and $\Kind$ are two sorts: terms of type $\Type$ are called types, and terms of type $\Kind$ are called kinds. $\Pi x : A. ~B$ is a dependent product, $\lambda x : A. ~t$ is an abstraction and $t ~u$ is an application. $\Pi x : A. ~B$ is simply written $A \ra B$ if $x$ does not appear in $B$. Signatures and contexts are finite sequences, and are written $\langle \rangle$ when empty. Signatures contain both typed constants and rewrite rules (written $\ell \lra r$). \lpm is a logical framework, in which $\Sigma$ is fixed by the user depending on the logic they are working in.

The relation $\lra_{\beta\Sigma}$ is generated by $\beta$-reduction and by the rules of $\Sigma$. More explicitly, $\lra_{\beta\Sigma}$ is the smallest relation, closed by context, such that if $t$ rewrites to $u$ for some rule in $\Sigma$ or by $\beta$-reduction then $t \lra_{\beta\Sigma} u$. Conversion $\equiv_{\beta\Sigma}$ is the reflexive, symmetric, and transitive closure of $\lra_{\beta\Sigma}$.

\subsection{The Type System of the \texorpdfstring{$\lambda\Pi$}{lambdaPi}-Calculus Modulo Theory}

We introduce in \cref{typ_lpcm,conv_lpcm} typing rules for \lpm. \cref{typ_lpcm} presents the usual typing rules while \cref{conv_lpcm} focuses on the conversion rules. We write $\vdash \Gamma$ when the context $\Gamma$ is well formed and $\Gamma \vdash t : A$ when $t$ is of type $A$ in the context $\Gamma$. $\langle \rangle \vdash t : A$ is simply written $\vdash t : A$. The notation $(\vdash \Gamma_1) \equiv (\vdash \Gamma_2)$ means that $\Gamma_1$ and $\Gamma_2$ are both well formed, have the same length and have the same variables with convertible types.
We write $(\Gamma_1 \vdash t_1 : A_1) \equiv (\Gamma_2 \vdash t_2 : A_2)$ when $t_1$ and $t_2$ are convertible with $\Gamma_1 \vdash t_1 : A_1$ and $\Gamma_2 \vdash t_2 : A_2$. In particular, convertible terms $t_1 \equiv t_2$ are authorized to have different types---provided that both types are convertible---and to be typed in different contexts---provided that both contexts are convertible. In \textsc{ConvRule}, $\vec{x}$ is a vector representing the free variables of $\ell$. The standard weakening rule and substitution lemma can be derived from this type system.

\begin{figure}
\begin{mathpar}
\inferrule*[right={[Empty]}]{ }{\vdash \langle \rangle}

\inferrule*[right={[Decl] $x \notin \Gamma$}]{\vdash \Gamma \\ \Gamma \vdash A : s}{\vdash \Gamma, x : A}

\inferrule*[right={[Sort]}]{\vdash \Gamma}{\Gamma \vdash \Type : \Kind}

\inferrule*[right={[Const] $c : A \in \Sigma$}]{\vdash \Gamma \\ \vdash A : s}{\Gamma \vdash c : A}

\inferrule*[right={[Var] $x : A \in \Gamma$}]{\vdash \Gamma}{\Gamma \vdash x : A}

\inferrule*[right={[Prod]}]{\Gamma \vdash A : \Type \\ \Gamma, x : A \vdash B : s}{\Gamma \vdash \Pi x : A. ~B : s}

\inferrule*[right={[Abs]}]{\Gamma \vdash A : \Type \\ \Gamma, x : A \vdash B : s \\ \Gamma, x : A \vdash t : B}{\Gamma \vdash \lambda x : A. ~t : \Pi x : A. ~B}

\inferrule*[right={[App]}]{\Gamma \vdash t : \Pi x : A. ~B \\ \Gamma \vdash u : A}{\Gamma \vdash t\ u : B[x \mapsto u]}

\inferrule*[right={[Conv]}]{\Gamma \vdash t : A \\ (\Gamma \vdash A : s) \equiv (\Gamma \vdash B : s)}{\Gamma \vdash t : B}
\end{mathpar}

\caption{Typing rules of the \lpcm}
\label{typ_lpcm}
\end{figure}

\begin{figure}
\begin{mathpar}
\inferrule*[right={[ConvRefl]}]{\Gamma \vdash u : A}{(\Gamma \vdash u : A) \equiv (\Gamma \vdash u : A)}

\inferrule*[right={[ConvSym]}]{(\Gamma \vdash u : A) \equiv (\Gamma \vdash v : B)}{(\Gamma \vdash v : B) \equiv (\Gamma \vdash u : A)}

\inferrule*[right={[ConvTrans]}]{(\Gamma \vdash u : A) \equiv (\Gamma \vdash v : B) \\ (\Gamma \vdash v : B) \equiv (\Gamma \vdash w : C)}{(\Gamma \vdash u : A) \equiv (\Gamma \vdash w : C)}

\inferrule*[right={[ConvDecl] $x \notin \Gamma_1, \Gamma_2$}]{(\vdash \Gamma_1) \equiv (\vdash \Gamma_2) \\ (\Gamma_1 \vdash A_1 : s) \equiv (\Gamma_2 \vdash A_2 : s)}{(\vdash \Gamma_1, x : A_1) \equiv (\vdash \Gamma_2, x : A_2)}

\inferrule*[right={[ConvConst] $c : A \in \Sigma$}]{(\vdash \Gamma_1) \equiv (\vdash \Gamma_2) \\ \vdash A : s}{(\Gamma_1 \vdash c : A) \equiv (\Gamma_2 \vdash c : A)}

\inferrule*[right={[ConvVar] $x : A_1 \in \Gamma_1, x : A_2 \in \Gamma_2$}]{(\vdash \Gamma_1) \equiv (\vdash \Gamma_2)}{(\Gamma_1 \vdash x : A_1) \equiv (\Gamma_2 \vdash x : A_2)}

\inferrule*[right={[ConvProd]}]{(\Gamma_1 \vdash A_1 : \Type) \equiv (\Gamma_2 \vdash A_2 : \Type) \\ (\Gamma_1, x : A_1 \vdash B_1 : s) \equiv (\Gamma_2, x : A_2 \vdash B_2 : s)}{(\Gamma_1 \vdash \Pi x : A_1. ~B_1 : s) \equiv (\Gamma_2 \vdash \Pi x : A_2. ~B_2 : s)}

\inferrule*[right={[ConvAbs]}]{(\Gamma_1 \vdash A_1 : \Type) \equiv (\Gamma_2 \vdash A_2 : \Type) \\ (\Gamma_1, x : A_1 \vdash B_1 : s) \equiv (\Gamma_2, x : A_2 \vdash B_2 : s) \\ (\Gamma_1, x : A_1 \vdash t_1 : B_1) \equiv (\Gamma_2, x : A_2 \vdash t_2 : B_2)}{(\Gamma_1 \vdash \lambda x : A_1. ~t_1 : \Pi x : A_1. ~B_1) \equiv (\Gamma_2 \vdash \lambda x : A_2. ~t_2 : \Pi x : A_2. ~B_2)}

\inferrule*[right={[ConvApp]}]{(\Gamma_1 \vdash t_1 : \Pi x : A_1. ~B_1) \equiv (\Gamma_2 \vdash t_2 : \Pi x : A_2. ~B_2) \\ (\Gamma_1 \vdash u_1 : A_1) \equiv (\Gamma_2 \vdash u_2 : A_2)}{(\Gamma_1 \vdash t_1 ~u_1 : B_1[x \mapsto u_1]) \equiv (\Gamma_2 \vdash t_2 ~u_2 : B_2[x \mapsto u_2])}

\inferrule*[right={[ConvBeta]}]{\Gamma \vdash A : \Type \\ \Gamma, x : A \vdash t : B \\ \Gamma, x : A \vdash B : s \\ \Gamma \vdash u : A}{(\Gamma \vdash (\lambda x : A. ~t) ~u : B[x \mapsto u]) \equiv (\Gamma \vdash t[x \mapsto u] : B[x \mapsto u])}

\inferrule*[right={[ConvRule] $\ell \lra r \in \Sigma$}]{\bm{x} : \bm{B} \vdash \ell : A \\ \bm{x} : \bm{B} \vdash r : A \\ \Gamma \vdash \bm{t} : \bm{B}}{(\Gamma \vdash \ell[\bm{x} \mapsto \bm{t}] : A[\bm{x} \mapsto \bm{t}]) \equiv (\Gamma \vdash r[\bm{x} \mapsto \bm{t}] : A[\bm{x} \mapsto \bm{t}])}

\inferrule*[right={[ConvConv]}]{\Gamma \vdash u  : A \\ (\Gamma \vdash A : s) \equiv (\Gamma \vdash B : s)}{(\Gamma \vdash u  : A) \equiv (\Gamma \vdash u : B)}
\end{mathpar}

\caption{Convertibility rules of the \lpcm}
\label{conv_lpcm}
\end{figure}

\begin{lemma}[Substitution]
\label{lemma_subst}
\begin{itemize}
\item If we have $\vdash \Gamma, x : A, \Delta$ and $\Gamma \vdash u : A$, then $\vdash \Gamma, \Delta[x \mapsto u]$.
\item If we have $\Gamma, x : A, \Delta \vdash t : B$ and $\Gamma \vdash u : A$, then $\Gamma, \Delta[x \mapsto u] \vdash t[x \mapsto u] : B[x \mapsto u]$.
\item If we have $(\vdash \Gamma_1, x : A_1, \Delta_1) \equiv (\vdash \Gamma_2, x : A_2, \Delta_2)$ and $\Gamma_1 \vdash u : A_1$, then $(\vdash \Gamma_1, \Delta_1[x \mapsto u]) \equiv (\vdash \Gamma_2, \Delta_2[x \mapsto u])$.
\item If we have $(\Gamma_1, x : A_1, \Delta_1 \vdash t_1 : B_1) \equiv (\Gamma_2, x : A_2, \Delta_2 \vdash t_2 : B_2)$ and $\Gamma_1 \vdash u : A_1$, then $(\Gamma_1, \Delta_1[x \mapsto u] \vdash t_1[x \mapsto u] : B_1[x \mapsto u]) \equiv (\Gamma_2, \Delta_2[x \mapsto u] \vdash t_2[x \mapsto u] : B_2[x \mapsto u])$.
\end{itemize}
\end{lemma}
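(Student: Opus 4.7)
The plan is to prove all four items simultaneously by mutual induction on the derivation of the first hypothesis (the well-formedness of $\Gamma, x : A, \Delta$, or the typing judgment $\Gamma, x : A, \Delta \vdash t : B$, or the two conversion judgments). Because the typing rules of \cref{typ_lpcm} and the convertibility rules of \cref{conv_lpcm} are defined mutually, all four statements must be treated together; otherwise the \textsc{Conv} and \textsc{ConvConv} cases would have no induction hypothesis available for the convertibility premise.

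For the context formation items, the only non-trivial case is \textsc{Decl}: a derivation of $\vdash \Gamma, x : A, \Delta', y : C$ gives us $\vdash \Gamma, x : A, \Delta'$ and $\Gamma, x : A, \Delta' \vdash C : s$, to which I apply the induction hypotheses of items~(1) and~(2), and then reapply \textsc{Decl} to conclude. Similarly, \textsc{ConvDecl} is handled using items~(3) and~(4). For the typing items, I go through each rule of \cref{typ_lpcm}. The \textsc{Var} case splits: if the variable is $y \neq x$, declared in $\Gamma$ or in $\Delta$, we conclude with \textsc{Var} after substituting inside $\Delta$; if the variable is $x$ itself, then $t = x$ and $B = A$, and since $x \notin \mathrm{FV}(A)$ (because $A$ was typed under $\Gamma$), we have $A[x \mapsto u] = A$, so $u$ is precisely the required term—using weakening to put it in the context $\Gamma, \Delta[x \mapsto u]$. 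The \textsc{Prod}, \textsc{Abs}, and \textsc{App} cases proceed by straightforward induction, using the commutation of substitution with itself (e.g.\ $(B[y \mapsto v])[x \mapsto u] = (B[x \mapsto u])[y \mapsto v[x \mapsto u]]$, valid because $y$ is fresh) for \textsc{App}. The \textsc{Conv} case uses the induction hypothesis of item~(4) on the conversion premise.

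The convertibility items (3) and (4) proceed in the same way, case-splitting on the last rule used. The structural rules \textsc{ConvRefl}, \textsc{ConvSym}, \textsc{ConvTrans}, \textsc{ConvConst}, \textsc{ConvProd}, \textsc{ConvAbs}, \textsc{ConvApp}, and \textsc{ConvBeta} all go through by applying the induction hypotheses and reassembling the rule. The \textsc{ConvVar} case is analogous to the \textsc{Var} case above: for $y \neq x$ we conclude directly, and for $y = x$ both sides reduce to $u$ typed in convertible contexts, giving the conclusion via \textsc{ConvRefl} and weakening. The \textsc{ConvConv} case uses the induction hypothesis on the convertibility premise between $A$ and $B$.

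The main obstacle is the \textsc{ConvRule} case. Here $\ell$ and $r$ are typed in the context $\bm{x} : \bm{B}$ of the rewrite rule's free variables, which is disjoint from the ambient context, and the rule is instantiated by a substitution $\bm{t}$. Substituting $u$ for $x$ must commute with the rule's substitution $[\bm{x} \mapsto \bm{t}]$, so I apply the induction hypothesis to each component of $\bm{t}$ to obtain $\Gamma, \Delta[x \mapsto u] \vdash \bm{t}[x \mapsto u] : \bm{B}$, and reapply \textsc{ConvRule}. Care is needed to ensure the names of the rewrite rule's bound variables can be chosen fresh with respect to $x$, which is a standard $\alpha$-renaming argument. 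Once this commutation is set up, the rest of the proof reduces to routine bookkeeping that substitution and context manipulation behave as expected.
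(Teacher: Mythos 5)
Your proposal is correct and takes the same route as the paper: the paper's entire proof is "by induction on the typing derivation," and your mutual induction over the typing and convertibility rules (with the case analysis on \textsc{Var}, \textsc{ConvVar}, \textsc{ConvRule}, etc.) is just a detailed spelling-out of that argument. The auxiliary facts you invoke (weakening, commutation of substitutions, $\alpha$-renaming of rule variables) are standard and consistent with the paper, which itself notes that weakening is derivable in this system.
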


\begin{proof}
We proceed by induction on the typing derivation.
\end{proof}
We chose to present a type system with \emph{typed} conversion (written $\equiv$)---so as to easily do proofs on the derivations---while the usual type system for \lpm features \emph{untyped} conversion (written $\equiv_{\beta\Sigma}$). The equivalence between type systems with typed conversion and type systems with untyped conversion has been a longstanding question: Geuvers and Werner~\cite{geuvers_werner} investigated the case of Pure Type Systems with $\beta\eta$-convertibility, Adams~\cite{adams} proved the equivalence in the case of functional Pure Type Systems, and Siles~\cite{silesphd,siles} later proved the equivalence in the general case of the Pure Type Systems. The case of \lpm, in which we have $\beta$-convertibility but also user-defined rewrite rules, remains to be investigated.

We write $| \Sigma |$ for the set of constants of $\Sigma$, and $\Lambda(\Sigma)$ for the set of terms $t$ whose constants belong to $| \Sigma |$. We say that $\T = \Sigma$ is a theory when for each rule $\ell \lra r \in \Sigma$ we have $\ell$ and $r$ in $\Lambda(\Sigma)$, when $\lra_{\beta\Sigma}$ is confluent on $\Lambda(\Sigma)$, and when every rule of $\Sigma$ preserves typing in $\Sigma$ (that is when for all context $\Gamma$ and for all term $A \in \Lambda(\Sigma)$, if $\Gamma \vdash \ell : A$ then $\Gamma \vdash r : A$).

\begin{example}[Natural numbers and lists]
We can define in \lpm a partial theory of natural numbers and indexed lists of natural numbers. $\nat$ represents the type of natural numbers and $\lis$ represents the dependent type of indexed lists of natural numbers. $\cons$ adds a new element to a list, $\concat$ concatenates two lists, and $\isrev$ checks if the first given list is the reverse of the second.
\begin{mathpar}
  \nat : \Type

  0 : \nat

  \succ : \nat \ra \nat

  + : \nat \ra \nat \ra \nat

  x + 0 \lra x

  x + \succ\ y \lra \succ\ (x + y)

  \lis : \nat \ra \Type

  \nil : \lis\ 0

  \cons : \Pi x : \nat.\ \lis\ x \ra \nat \ra \lis\ (\succ\ x)

  \isrev : \Pi x : \nat.\ \lis\ x \ra \lis\ x \ra \Type

  \concat : \Pi x,y : \nat.\ \lis\ x \ra \lis\ y \ra \lis\ (x + y)
\end{mathpar}
In the context $\ell : \lis\ (\succ\ 0)$, we have $\concat\ (\succ\ 0)\ 0\ \ell\ \nil$ of type $\lis\ (\succ\ 0 + 0)$. If we want to compare $\ell$ and this new list with $\isrev$, we cannot directly do it because they do not have the same type. However, we can use the conversion rule with $\lis\ (\succ\ 0 + 0) \equiv_{\beta\Sigma} \lis\ (\succ\ 0)$. This conversion derives from the rewrite rule $x + 0 \lra x$ instantiated with $x \coloneqq \succ\ 0$.
\end{example}

\subsection{A Prelude Encoding for the \texorpdfstring{$\lambda\Pi$}{lambdaPi}-Calculus Modulo Theory}

It is possible to introduce in \lpm the notions of proposition and proof~\cite{theoryU}. In particular, this encoding---called prelude encoding---gives the possibility to quantify on certain propositions through codes, which is not possible inside the standard \lpm. This encoding is defined by following signature.

\begin{definition}
The signature $\Sigma_{pre}$ contains the following constants and rewrite rules:
\begin{flalign*}
&\Set : \Type  & &\o : \Set \\
&\El : \Set \ra \Type & &\Prf : \El ~\o \ra \Type  \\
&\arrd : \Pi x : \Set. ~(\El ~x \ra \Set) \ra \Set & &\impd : \Pi x : \El ~\o. ~(\Prf ~x \ra \El ~\o) \ra \El ~\o \\
&\blpi : \Pi x : \El ~\o. ~(\Prf ~x \ra \Set) \ra \Set & &\fa : \Pi x : \Set. ~(\El ~x \ra \El ~\o) \ra \El ~\o \\
&\El ~(x \arrd y) \lra \Pi z : \El ~x. ~\El ~(y ~z) & &\Prf ~(x \impd y) \lra \Pi z : \Prf ~x. ~\Prf ~(y ~z) \\
&\El ~(\blpi ~x ~y) \lra \Pi z : \Prf ~x. ~\El ~(y ~z) & &\Prf ~(\fa ~x ~y) \lra \Pi z : \El ~x. ~\Prf ~(y ~z)
\end{flalign*}

\end{definition}
We declare the constant $\Set$, which represents the universe of types, along with the injection $\El$ that maps terms of type $\Set$ into $\Type$. $\o$ is a term of type $\Set$ such that $\El ~\o$ defines the universe of propositions. The injection $\Prf$ maps propositions into $\Type$. $\arrd$ (respectively $\impd$) is written infix and is used to represent dependent function types between terms of type $\Set$ (respectively $\El ~\o$). The symbol $\blpi$ (respectively $\fa$) is used to represent dependent function types between elements of type $\El ~\o$ and $\Set$ (respectively $\Set$ and $\El ~\o$).

The main advantage of this encoding is that it allows us to quantify on propositions. Indeed, in \lpm, we cannot quantify on $\Type$. Instead, we can quantify on objects of type $\El ~\o$, and then inject them into $\Type$ using $\Prf$.

\subsection{Small Types and Small Derivations}

As we work in \lpm rather than in an extension of Martin-Löf type theory, we do not have a pre-defined equality. Moreover, we cannot define an equality between types since such object would have type $\Type \ra \Type \ra \Type$, which is not allowed in \lpm.

If we want to compare types $\Prf ~a$ and $\Prf ~b$, we cannot do it directly, but we can compare $a$ and $b$ (that are of type $\El ~\o$). We can proceed similarly to compare types $\El ~a$ and $\El ~b$ (with $a$ and $b$ of type $\Set$). In that respect, we want types to be into a special form---called small type---that takes advantages of the prelude encoding, so as to compare them if necessary. To put types of the prelude encoding into this special form, we use the reverse of the rewrite rules of $\Sigma_{pre}$ to represent dependent types with the symbols $\arrd$, $\impd$, $\blpi$ and $\fa$ whenever it is possible. This is achieved by the partial function $\nu$, defined by:
\begin{mathpar}
\nu(\Set) = \Set

\nu(\Prf ~a) = \Prf ~a

\nu(\El ~a) = \El ~a

\begin{array}{lcll}
    \nu(\Pi x : A. ~B) &=&
    \Prf ~(a \impd (\lambda x : \Prf ~a. ~b)) & \text{ if $\nu(A) = \Prf ~a$ and $\nu(B) = \Prf ~b$} \\
    && \El ~(a \arrd (\lambda x : \El ~a. ~b)) & \text{ if $\nu(A) = \El ~a$ and $\nu(B) = \El ~b$} \\
    && \Prf ~(\fa ~a ~(\lambda x : \El ~a. ~b)) & \text{ if $\nu(A) = \El ~a$ and $\nu(B) = \Prf ~b$} \\
    && \El ~(\blpi ~a ~(\lambda x : \Prf ~a. ~b)) & \text{ if $\nu(A) = \Prf ~a$ and $\nu(B) = \El ~b$} \\
    && \Pi x : \nu(A). ~\nu(B) & \text{ otherwise}
  \end{array}
\end{mathpar}
Therefore, when $\nu(A)$ is defined, we have $A \equiv_{\beta\Sigma_{pre}} \nu(A)$. Note that $\nu$ is partial because we do not handle the case where a type is a $\beta$-reducible expression, as in practice we will not have types under $\lambda$-abstraction form.

To continue to characterize a particular form of types, we define the three following grammars:
\begin{mathpar}
\S \Coloneqq \Set \mid \S \ra \S

\P \Coloneqq \Prf ~a \mid \P \ra \S \mid \Pi z : \S.\ \P

\E \Coloneqq \El ~b \mid \E \ra \S \mid \Pi z : \S.\ \E
\end{mathpar}
with $a : \El ~\o$ and $b : \Set$. The notation $A \in \S$ means that $A$ is generated by the grammar $\S$. The grammar $\S$ generates types that only contain $\Set$. Therefore, if $\nu(A) \in \S$ then $\nu(A) = A$. The grammars $\P$ and $\E$ generate types that contain a central symbol $\Prf$ or $\El$.

\begin{definition}[Small type, Small context]
A type $A$ is small when $\nu(A)$ is defined and $\nu(A) \in \S \cup \P \cup \E$. In that case, $\nu(A)$ is called the small form of $A$. A context $\Gamma$ is small when for every $x : A \in \Gamma$ we have that $A$ is a small type.
\end{definition}
\begin{example}
$\Prf ~a \ra \Prf ~b$, with $a, b : \El ~\o$, is a small type since its small form
$\Prf ~(a \impd (\lambda z. ~b))$
is generated by the grammar $\P$. The type $\Pi x : \Prf ~b. ~\El ~c$, with $c : \Set$ depending on $x$, is a small type since its small form $\El ~(\blpi ~b ~(\lambda x : \Prf ~b. ~c))$ is generated by the grammar $\E$. The type $\Prf ~a \ra \Set \ra \Prf ~b$ is not small, since $\nu(\Prf ~a \ra \Set \ra \Prf ~b) = \Prf ~a \ra \Set \ra \Prf ~b \notin \S \cup \P \cup \E$.
\end{example}
We would ideally like all the types to be small, so that we can compare them if necessary. Therefore, if $\Gamma \vdash t : A$, we want $A$ to be a small type, or $t$ to be a small type and $A = \Type$. However, small types are built using the constants of $\Sigma_{pre}$. In particular, the type of the constants $\o$, $\arrd$, $\impd$ and $\fa$ are small, but the types of $\blpi$, $\Prf$ and $\El$ are not. Note that the type of an application of $\blpi$, $\Prf$ or $\El$ is small. We thus come up with the following notion.

\begin{definition}[Small judgment]
$\vdash \Gamma$ is a small judgment when $\Gamma$ is a small context.
$\Gamma \vdash t : A$ is a small judgment when $\Gamma$ is a small context and when
\begin{itemize}
\item $t : A \in \Sigma_{pre}$,
\item or $t$ is the type of a constant of $\Sigma_{pre}$,
\item or $A$ is a small type,
\item or $t$ is a small type.
\end{itemize}
$(\Gamma_1 \vdash t_1 : A_1) \equiv (\Gamma_2 \vdash t_2 : A_2)$ is a small judgment when $\Gamma_1 \vdash t_1 : A_1$ and $\Gamma_2 \vdash t_2 : A_2$ are small.
\end{definition}

\begin{definition}[Small derivation]
A small derivation is a derivation in which all the judgments are small.
\end{definition}

\subsection{Theories with Prelude Encoding}

We define the theories we will consider in the rest of the paper: theories that features the prelude encoding inside \lpm.

\begin{definition}[Theory with prelude encoding]
We say that a theory $\T = \Sigma$ in the \lpm is a theory with prelude encoding when:
\begin{itemize}
\item there exists $\Sigma_\T$ such that $\Sigma = \Sigma_{pre} \cup \Sigma_\T$ and $\Sigma_{pre} \cap \Sigma_\T = \emptyset$,
\item for every $c : A \in \Sigma_\T$, $A$ is small and admits a small derivation $\vdash A : \Type$,
\item for every $\ell \lra r \in \Sigma_\T$, we have small derivations $\bm{x} : \bm{B} \vdash \ell : A$ and $\bm{x} : \bm{B} \vdash r : A$ with $A$ a small type, where $\bm{x}$ represents the free variables of $\ell$.
\end{itemize}
\end{definition}
A theory with prelude encoding is a theory with the constants and rewrite rules $\Sigma_{pre}$, and additional user-defined constants and rewrite rules. To ensure that $\Sigma_\T$ is encoded \emph{inside} the prelude encoding, we can only define new constants whose types are small. We do not allow the use of rewrite rules $\ell \lra r$ when $\ell$ has $\Type$ in its type. In particular, we cannot define new rewrite rules on $\Prf$ or $\El$ and change the behavior of these constants. It follows that the three grammars $\S$, $\P$ and $\E$ generate disjoint types.

In the following examples, we present three theories with prelude encoding in \lpm. The examples of predicate logic and set theory illustrate that the restrictions considered are generally respected, even for expressive theories.

\begin{example}[Predicate logic]
Predicate logic can be encoded in a theory with prelude encoding. We declare constants for tautology and contradiction $\top, \bot : \El ~\o$, for negation $\neg : \El ~\o \ra \El ~\o$, for conjunction and disjunction $\wedge, \vee : \El ~\o \ra \El ~\o \ra \El ~\o$, and for existential quantification  $\exists : \Pi z : \Set. ~(\El ~z \ra \El ~\o) \ra \El ~\o$. The semantics of tautology is defined by the rewrite rule $\top \lra \fa ~\o ~(\lambda x : \El ~\o. ~x \imp x)$, which is equivalent to the more common form $\Prf ~\top \lra \Pi z : \El ~\o. ~\Prf ~z \ra \Prf ~z$. The rewrite rule $\Prf ~(A \wedge B) \lra \Pi P : \El ~\o. ~(\Prf ~A \ra \Prf ~B \ra \Prf ~P) \ra \Prf ~P$ can be encoded by $A \wedge B \lra \fa ~\o ~(\lambda P. ~(A \ra B \ra P) \ra P)$. The rule $\Prf ~(\neg A) \lra \Prf ~A \ra \Prf ~\bot$ is forbidden, but $\neg A \lra A \imp \bot$ is allowed. We proceed similarly the other rewrite rules.
\end{example}

\begin{example}[Natural numbers and lists]
\label{ex_vec}
We can define our small theory of natural numbers and lists in the prelude encoding, by replacing $\Type$ by $\Set$ (in the universe of types) or $\El ~\o$ (in the universe of propositions), and by adding $\El$ and $\Prf$ at the necessary positions.
\begin{mathpar}
\nat : \Set

0 : \El\ \nat

\succ : \El\ \nat \ra \El\ \nat

+ : \El\ \nat \ra \El\ \nat \ra \El\ \nat

\lis : \El\ \nat \ra \Set

x + 0 \lra x

x + \succ\ y \lra \succ\ (x + y)

\nil : \El\ (\lis\ 0)

\cons : \Pi x : \El\ \nat.\ \El\ \lis\ x \ra \El\ \nat \ra \El\ (\lis\ (\succ\ x))

\isrev : \Pi x : \El\ \nat.\ \El\ (\lis\ x) \ra \El\ (\lis\ x) \ra \El\ \o

\concat : \Pi x,y : \El\ \nat.\ \El\ (\lis\ x) \ra \El\ (\lis\ y) \ra \El\ (\lis\ (x + y))
\end{mathpar}
\end{example}

\begin{example}[Set theory]
The implementation in \textsc{Dedukti} of set theory~\cite{deduktiz} is a theory with prelude encoding. In this implementation, sets are represented by a more primitive notion of pointed graphs: we have $\graph$ and $\node$ of type $\Set$. The predicate $\eta : \El\ \graph \ra \El\ \node \ra \El\ \node \ra \El\ \o$ is such that $\eta\ a\ x\ y$ is the proposition asserting that there is an edge in $a$ from $y$ to $x$. The operator $\root : \El\ \graph \ra \El\ \node$ returns the root of a, which is a node.
\end{example}
In practice, the derivations of small judgments are small derivations. As we consider theories with prelude encoding, the only way of introducing a judgment that is not small is through $\lambda$-abstractions. For instance in \cref{ex_vec} the judgment $\vdash \El ~(\lis ~((\lambda x : \El ~\nat. ~\lambda y : \Set. ~x) ~0 ~\nat)) : \Type$ is small, but in its derivation we have $\vdash \lambda x : \El ~\nat. ~\lambda y : \Set. ~x : \El ~\nat \ra \Set \ra \El ~\nat$ which is not a small judgment. However, $\vdash \El ~(\lis ~0) : \Type$ admits a small derivation. If the derivation is not small, we can in practice apply $\beta$-reduction on the fragments of the derivation that are not small to obtain a small derivation.

\section{Equalities}
\label{part_eq}

Since we want to replace rewrite rules $\ell \lra r$ by equational axioms $\ell = r$, we need to define an equality in the target theory. In this section, we present a heterogeneous equality and a method to compare small types. The heterogeneous equality is necessary to compare objects that do not have the same type. Although we cannot define an equality between types in \lpm, it is possible to develop an equality between small types, taking advantage of their structure.

\subsection{Heterogeneous Equality}

In our development, we need to have an equality between two translations of the same term. However, the two translations do not necessarily have the same type, as we may have introduced transports over the course of the translation. To that end, we define a heterogeneous equality inspired by the one of McBride~\cite{mcbride}. Our heterogeneous equality is defined by the constant schemas $\mathsf{heq}_{A,B} : A \ra B \ra \El ~\o$ where $A$ and $B$ are of type $\Type$.
We write
$\heq{u}{A}{v}{B}$ for $\Prf ~(\mathsf{heq}_{A,B} ~u ~v)$. Heterogeneous equality is reflexive, symmetric, and transitive.
\[
  \begin{array}{lcl}
	&&\refl_A :
	\Pi u : A. ~\heq{u}{A}{u}{A} \\
	&&\sym_{A,B} : \Pi u : A. ~\Pi v : B. ~\heq{u}{A}{v}{B} \ra \heq{v}{B}{u}{A} \\
	&&\trans_{A,B,C} : \Pi u : A. ~\Pi v : B. ~\Pi w : C.\ \heq{u}{A}{v}{B} \ra \heq{v}{B}{w}{C} \ra \heq{u}{A}{w}{C}
  \end{array}
\]
When two objects have the same type, heterogeneous equality acts as Leibniz equality. In particular, we can replace $u$ by $v$ in the universes of propositions and types. The result of a Leibniz substitution on $t$ remains equal to $t$.
\[
  \begin{array}{lcl}
	\leibp_A &:& \Pi u,v : A. ~\Pi p : \heq{u}{A}{v}{A}. ~\Pi P : A \ra \El ~\o. ~\Prf ~(P ~u) \ra \Prf ~(P ~v) \\
	\eqleibp_A &:& \Pi u,v : A. ~\Pi p : \heq{u}{A}{v}{A}. ~\Pi P : A \ra \El ~\o. ~\Pi t : \Prf ~(P ~u). \\
	&&\heq{\leibp_A ~u ~v ~p ~P ~t}{\Prf ~(P ~v)}{t}{\Prf ~(P ~u)}
  \end{array}
\]
The same axiom schemas exist for the universe of types, with superscript $\mathsf{El}$ instead of $\mathsf{Prf}$, $\El$ instead of $\Prf$, and $\Set$ instead of $\El ~\o$.

Finally, we add axioms for the congruence of each constructor of \lpm.

\paragraph{Application constructor.} For the application, we take:
\[
  \begin{array}{lcl}
	\app_{A_1,A_2,B_1,B_2} &:&
	\Pi t_1 : (\Pi x : A_1. ~B_1). ~\Pi t_2 : (\Pi x : A_2. ~B_2). \\
	&&\Pi u_1 : A_1. ~\Pi u_2 : A_2. ~\heq{t_1}{}{t_2}{} \ra \heq{u_1}{}{u_2}{} \\
	&&\ra \heq{t_1 ~u_1}{B_1[x \mapsto u_1]}{t_2 ~u_2}{B_2[x \mapsto u_2]}
  \end{array}
\]
For the $\lambda$-abstraction and $\Pi$-type constructors, we cannot directly build equality axioms. Indeed, if we want to define an equality between functional terms $t_1$ of type $\Pi x : A_1. ~B_1$ and $t_2$ of type $\Pi x : A_2. ~B_2$, we need to ensure that types $A_1$ and $A_2$ are equal. Therefore, we would like to have
\[
  \begin{array}{lcl}
	\fun_{A_1,A_2,B_1,B_2} &:&
	\Pi t_1 : (\Pi x : A_1. ~B_1). ~\Pi t_2 : (\Pi y : A_2. ~B_2). ~\heq{A_1}{}{A_2}{} \\
	&& \ra (\Pi x : A_1. ~\Pi y : A_2. ~\heq{x}{}{y}{} \ra \heq{t_1 ~x}{}{t_2 ~y}{}) \\
	&& \ra \heq{t_1}{}{t_2}{}
  \end{array}
\]
but we cannot take such an axiom, since the heterogeneous equality is not defined to compare objects that have type $\Type$, and $\heq{A_1}{}{A_2}{}$ is therefore ill typed. This shortcoming is addressed by developing an equality between small types.

\subsection{Equality between Small Types}

We cannot build an equality between types, since such an equality would have type $\Type \ra \Type \ra \Type$, which is impossible in \lpm. An option would be to take axiom schemas $\heq{A}{}{B}{}$ for every equality between types $A$ and $B$. Such an equality would be too far from standard and would require additional axioms to build transports. An alternative is to define an equality between small types. By construction, if $\nu(A) \in \P$, then $\nu(A)$ is generated from $\Prf ~a$ for some $a : \El ~\o$, and if $\nu(A) \in \E$, then $\nu(A)$ is generated from $\El ~a$ for some $a : \Set$. If the small form of $A$ contains $\Prf ~a$ and the small form of $B$ contains $\Prf ~b$, then we want an equality between $a$ and $b$. We define the partial function $\kappa$ on small forms by
\begin{mathpar}
\eqt{\Prf ~a_1}{\Prf ~a_2} = \heq{a_1}{}{a_2}{}

\eqt{\El ~a_1}{\El ~a_2} = \heq{a_1}{}{a_2}{}

\eqt{S}{S} = \true \text{ if $S \in \S$}

\eqt{T_1 \ra S}{T_2 \ra S} = \eqt{T_1}{T_2} \text{ if $S \in \S$}

\eqt{\Pi z : S. ~T_1}{\Pi z : S. ~T_2} = \Pi z : S. ~\eqt{T_1}{T_2} \text{ if $S \in \S$}
\end{mathpar}
where $\true \coloneqq \Pi P : \El ~\o. ~\Prf ~P \ra \Prf ~P$, so we can always give a witness of $\eqt{S}{S}$ if $S \in \S$. By convention, we simply write $\eqt{A}{B}$ for the result of $\eqt{\nu(A)}{\nu(B)}$.
\begin{example}
$\eqt{\Pi x : \Set. ~\Prf ~P \ra \Prf ~Q}{\Pi x : \Set. ~\Prf ~R} = \Pi x : \Set. ~\heq{(P \impd \lambda z : P. ~Q)}{}{R}{}$ since $\nu(\Pi x : \Set. ~\Prf ~P \ra \Prf ~Q) = \Pi x : \Set. ~\Prf ~(P \impd (\lambda z : P. ~Q))$.
\end{example}
We can now go back to the definition of equality axioms for the constructors of \lpm.

\paragraph{Function constructor.} If $A_1$ and $A_2$ are small types,
we can take $\eqt{A_1}{A_2}$. We do not compare objects of type $\Type$ anymore, but objects that have either type $\El ~\o$ or type $\Set$. The axiom schema for the function constructor is thus:
\[
  \begin{array}{lcl}
	\fun_{A_1,A_2,B_1,B_2} &:& \Pi t_1 : (\Pi x : A_1. ~B_1). ~\Pi t_2 : (\Pi y : A_2. ~B_2). ~\eqt{A_1}{A_2} \\
	&& \ra (\Pi x : A_1. ~\Pi y : A_2. ~\heq{x}{}{y}{} \ra \heq{t_1 ~x}{}{t_2 ~y}{}) \\
	&&\ra \heq{t_1}{}{t_2}{}
  \end{array}
\]
This axiom schema is a generalization of the \emph{functional extensionality} principle with distinct domains $A_1$ and $A_2$ in the case of heterogeneous equality. Functional extensionality states that two pointwise-equal functions are equal. If the domains $A_1$ and $A_2$ are generated by $\S$, then they are syntactically equal and we can derive a simpler axiom schema:
\[
  \begin{array}{lcl}
	\fun_{A,B_1,B_2} &:&
	\Pi t_1 : (\Pi x : A. ~B_1). ~\Pi t_2 : (\Pi x : A. ~B_2). ~(\Pi x : A. ~\heq{t_1 ~x}{}{t_2 ~x}{}) \\
	&& \ra \heq{t_1}{}{t_2}{}
  \end{array}
\]
\paragraph{$\Pi$-type constructor.} The congruence axiom for dependent types aims at building $\eqt{\Pi x : A_1. ~B_1}{\Pi x : A_2 ~B_2}$. There are different cases depending on the grammars generating $\nu(A_1)$, $\nu(A_2)$, $\nu(B_1)$ and $\nu(B_2)$. If $\nu(A_1)$, $\nu(A_2)$, $\nu(B_1)$, $\nu(B_2) \in \S$, then $\Pi x : A_1. ~B_1$ and $\Pi x : A_2. ~B_2$ are syntactically equal and we can build an object of type $\true$. If $\nu(A_1)$, $\nu(A_2) \in \S$ and $\nu(B_1)$, $\nu(B_2) \in \P \cup \E$, then $A_1 = A_2$ and $\eqt{\Pi x : A_1. ~B_1}{\Pi x : A_2 ~B_2} = \Pi x : A_1. ~\eqt{B_1}{B_2}$. If $\nu(A_1)$, $\nu(A_2) \in \P \cup \E$ and $\nu(B_1)$, $\nu(B_2) \in \S$, then $B_1 = B_2$ and $\eqt{\Pi x : A_1. ~B_1}{\Pi x : A_2 ~B_2} = \eqt{A_1}{A_2}$. If $\nu(A_1)$, $\nu(A_2)$, $\nu(B_1)$, $\nu(B_2) \in \P \cup \E$, then there are four cases, corresponding to $\arrd$, $\impd$, $\blpi$ and $\fa$. For instance, if $\nu(A_1)$, $\nu(A_2)$, $\nu(B_1)$ and $\nu(B_2)$ are all generated by $\E$, then necessarily we have $\nu(A_1) = \El ~a_1$, $\nu(A_2) = \El ~a_2$, $\nu(B_1) = \El ~b_1$ and $\nu(B_2) = \El ~b_2$. Therefore $\eqt{\Pi x : A_1. ~B_1}{\Pi x : A_2. ~B_2} \coloneqq \heq{(a_1 \arrd (\lambda x : \El ~a_1. ~b_1))}{}{(a_2 \arrd (\lambda y : \El ~a_2. ~b_2))}{}$. The axiom is:
\[
  \begin{array}{lcl}
	\prod_{\arrd} &:&
	\Pi a_1, a_2 : \Set. ~\Pi b_1 : (\El ~a_1 \ra \Set). ~\Pi b_2 : (\El ~a_2 \ra \Set). ~\heq{a_1}{}{a_2}{}\\
	&& \ra (\Pi x : \El ~a_1. ~\Pi y : \El ~a_2. ~\heq{x}{}{y}{} \ra \heq{b_1 ~x}{}{b_2 ~y}{}) \\
	&& \ra \heq{(a_1 \arrd b_1)}{}{(a_2 \arrd b_2)}{}
  \end{array}
\]
Note that this axiom is derivable from the previous axioms. We proceed similarly for the cases $\impd$, $\blpi$ and $\fa$.

We write $\Sigma_{eq}$ for the signature formed by the axiom schemas defining the heterogeneous equality. Reflexivity, symmetry, and transitivity are standard axioms of equality. We have also added axioms stating that a heterogeneous equality comparing two objects of the same type acts like Leibniz equality. Finally, we have an axiom for the application constructor and one axiom for the abstraction constructor---that is functional extensionality. Both axioms are used by Oury~\cite{oury}, who also assumes the uniqueness of identity proofs principle that entails the Leibniz principle we use.

\section{Replacing Rewrite Rules}
\label{part_trans}

When working in theories with prelude encoding, rewriting originates from the rewrite rules of $\Sigma_{pre}$ (which are generic rewrite rules), from the rewrite rules $\Sigma_\T$ (which are defined by the user) and from $\beta$-reduction. The goal of this work is to replace the user-defined rewrite rules $\Sigma_\T$ by equational axioms. In the rest of the paper, we write $\vdashr$ for a derivation inside the source theory---the theory with user-defined rewrite rules---and $\vdash$ for a derivation inside the target theory---the theory with axioms instead of user-defined rewrite rules.

We now have all the tools to replace rewrite rules by equational axioms. To do so, we build suitable transports, such that if $\Gamma \vdash t : A$ and $\Gamma \vdash p : \eqt{A}{B}$, then $\Gamma \vdash \transp ~p ~t : B$. The goal is to insert such transports into the terms instead of using conversion with the rules of $\Sigma_\T$. In the signature, each rewrite rule $\ell \lra r$ is replaced by the equational axiom $\heq{\overbar{\ell}}{}{\overbar{r}}{}$.

\subsection{Transports}

If we have $\Gamma \vdash t : A$ and $\Gamma \vdash p : \eqt{A}{B}$, we want to transport $t$ from $A$ to $B$, that is to build a term $\transp ~p ~t$ such that $\Gamma \vdash \transp ~p ~t : B$. A paramount result is that $t$ and $\transp ~p ~t$ are heterogeneously equal.

\begin{lemma}[Transport]
\label{lemma_transp}
Given $\Gamma \vdash t : A$ and $\Gamma \vdash p : \eqt{A}{B}$ with $A$ and $B$ small types, there exists $\transp ~p ~t$, called transport of $t$ along $p$, such that:
\begin{itemize}
\item $\Gamma \vdash \transp ~p ~t : B$,
\item there exists $\eqtrans$ such that $\Gamma \vdash \eqtrans ~p ~t : \heq{\transp ~p ~t}{B}{t}{A}$.
\end{itemize}
\end{lemma}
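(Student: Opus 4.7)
I would proceed by induction on the structure of $\nu(A)$ --- which must match $\nu(B)$ for $\eqt{A}{B}$ to be defined --- giving an explicit construction in each case. The base cases are direct. When $\nu(A) = \nu(B) = S \in \S$, the types $A$ and $B$ are convertible, so I set $\transp~p~t := t$ and $\eqtrans~p~t := \refl_A~t$. When $\nu(A) = \Prf~a_1$ and $\nu(B) = \Prf~a_2$, the hypothesis $p$ has type $\heq{a_1}{\El~\o}{a_2}{\El~\o}$; instantiating $\leibp_{\El~\o}$ and $\eqleibp_{\El~\o}$ with the identity predicate $\lambda z : \El~\o. ~z$ produces $\transp~p~t : \Prf~a_2$ and the desired heterogeneous equality. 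The case $\nu(A) = \El~a_1$, $\nu(B) = \El~a_2$ is handled symmetrically, with $\leibe$ and $\eqleibe$.

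For the inductive cases I would first establish an auxiliary symmetry lemma: from $p : \eqt{A}{B}$ one can build $\sym_\kappa~p : \eqt{B}{A}$, by structural induction on $\eqt{A}{B}$ (using $\sym$ at the heterogeneous-equality leaves, any canonical inhabitant of $\true$ at the $\S$ leaves, and $\lambda$-abstraction through the function and product constructors). The dependent case $\nu(A) = \Pi z : S. ~T_1$, $\nu(B) = \Pi z : S. ~T_2$ then reduces to pointwise transport: $\transp~p~t := \lambda z : S. ~\transp~(p~z)~(t~z)$, with the equality assembled from the pointwise $\eqtrans~(p~z)~(t~z)$ via the coincident-domain variant of $\fun$, concluded by a final $\sym$. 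The non-dependent case $\nu(A) = T_1 \ra S$, $\nu(B) = T_2 \ra S$ requires shifting the argument in the reverse direction: $\transp~p~t := \lambda x : T_2. ~t~(\transp~(\sym_\kappa~p)~x)$, which is well-typed by the induction hypothesis applied with $\sym_\kappa~p : \eqt{T_2}{T_1}$.

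The main obstacle will be constructing $\eqtrans$ in this last case. Applying the generalized $\fun$ axiom with distinct domains $T_1$ and $T_2$ reduces the goal to showing that for all $x : T_1$, $y : T_2$ with $q : \heq{x}{T_1}{y}{T_2}$, one has $\heq{t~x}{S}{t~(\transp~(\sym_\kappa~p)~y)}{S}$. This follows by composing $q$ with $\sym~(\eqtrans~(\sym_\kappa~p)~y) : \heq{y}{T_2}{\transp~(\sym_\kappa~p)~y}{T_1}$ via $\trans$ to obtain a homogeneous equality in $T_1$, and then invoking the congruence axiom $\app$ with $\refl_{T_1 \ra S}~t$; a final $\sym$ orients the result to $\heq{\transp~p~t}{B}{t}{A}$. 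All ingredients are provided by the induction hypothesis, so the construction closes uniformly.
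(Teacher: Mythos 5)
Your construction is correct and matches the paper's proof in all essentials: the $\S$ case is the identity, the $\Prf$/$\El$ base cases use $\leibp_{\El~\o}$ (resp.\ $\leibe$) with the identity predicate together with $\eqleibp$/$\eqleibe$, the arrow case precomposes $t$ with a transport along the symmetric equality and assembles the equality via $\trans$, $\app$ and $\fun$, and the dependent case transports pointwise and concludes with $\fun$ and $\beta\Sigma_{pre}$-conversion. Your explicit auxiliary lemma $\sym_\kappa$ just spells out what the paper abbreviates as ``using $\sym$'', and the remaining differences (the extra final $\sym$'s) are mere orientation bookkeeping.
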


\begin{proof}
$A$ and $B$ are small types and we have an equality $\eqt{A}{B}$. If $A,B \in \S$ then $\nu(A) = \nu(B) = A = B$ and we take $\transp ~p ~t \coloneqq t$ and $\eqtrans ~p ~t \coloneqq \refl_A ~t$. Otherwise, by construction of $\kappa$, we know that $\nu(A), \nu(B) \in \P$, or $\nu(A), \nu(B) \in \E$, and that $\nu(A)$ and $\nu(B)$ have the same structure. Moreover, using $A \equiv_{\beta\Sigma_{pre}} \nu(A)$, we have $\Gamma \vdash t : \nu(A)$. We proceed by induction on the grammar $\P$ (we proceed similarly for the grammar $\E$).
\begin{itemize}
\item If $\nu(A) = \Prf ~a$ and $\nu(B) = \Prf ~b$, then we have $\Gamma \vdash p : \heq{a}{}{b}{}$. We take $\transp ~p ~t \coloneqq \leibp_{\El ~\o} ~a ~b ~p ~(\lambda w : \El ~\o. ~w) ~t$. We conclude using $\eqleibp_{\El ~\o}$.

\item If $\nu(A) = A' \ra S$ and $\nu(B) = B' \ra S$, with $A', B' \in \P$ and $S \in \S$, then we have $\eqt{A'}{B'} = \eqt{A}{B}$. From $\Gamma \vdash p : \eqt{A'}{B'}$ we can build some $p'$ such that $\Gamma \vdash p' : \eqt{B'}{A'}$ (using $\sym$). By weakening, we also have $p' : \eqt{B'}{A'}$ in the context $\Gamma, m_b : B'$. By induction, we have $\transp ~p' ~m_b : A'$ and $\eqtrans ~p' ~m_b : \heq{\transp ~p' ~m_b}{}{m_b}{}$ in the context $\Gamma, m_b : B'$. We take $\transp ~p ~t \coloneqq \lambda m_b : B'. ~t ~(\transp ~p' ~m_b)$. Using $\trans$ and $\app$ we obtain an equality $\heq{t ~(\transp ~p' ~m_b)}{}{t ~m_a}{}$ in the context $\Gamma, m_a : A', m_b : B', p_m : \heq{m_a}{}{m_b}{}$. Using $\fun$ and $\equiv_{\beta\Sigma_{pre}}$, we have $\heq{\lambda m_b : B'.\ t\ (\transp\ p'\ m_b)}{}{t}{}$ in the context $\Gamma$.

\item If $\nu(A) = \Pi z : S. ~A'$ and $\nu(B) = \Pi z : S. ~B'$ with $A', B' \in \P$ and $S \in \S$, then we have $\eqt{A}{B} = \Pi z : S. ~\eqt{A'}{B'}$. By weakening and application, we have $\Gamma, z : S \vdash p ~z : \eqt{A'}{B'}$. By induction we have $\transp ~(p ~z) ~(t ~z) : B'$ and $\eqtrans ~(p ~z) ~(t ~z) : \heq{\transp ~(p ~z) ~(t ~z)}{}{t ~z}{}$ in the context $\Gamma, z : S$. We take $\transp ~p ~t \coloneqq \lambda z : S. ~\transp ~(p ~z) ~(t ~z)$. We obtain $\heq{\lambda z : S. ~\transp ~(p ~z) ~(t ~z)}{}{t}{}$ using $\fun$ and $\equiv_{\beta\Sigma_{pre}}$. \qed
\end{itemize}
\end{proof}
The transport of $t$ from $A$ to $B$ depends on the small form of $A$ and $B$. In that respect, there exists a different transport for each possible family of small form, and such transport is indexed over an equality of a small type.

\subsection{Translation of Terms}
\label{posstrans}

To translate a theory with rewrite rules into a theory with equational axioms, we add
transports at the proper locations in the terms and types. If we have $\Gamma \vdashr t : A$ in the source theory, we want to find $\overbar{\Gamma}$, $\overbar{t}$ and $\overbar{A}$ that are translations of $\Gamma$, $t$ and $A$, and such that $\overbar{\Gamma} \vdash \overbar{t} : \overbar{A}$ in the target theory.

We add transports in a term by induction on a typing derivation---which is not unique---so we may have different translations for a same term. As such, we define a relation $\trad$ where $\overbar{t} \trad t$ states that $\overbar{t}$ is a translation of $t$. The relation is defined by induction on the terms of \lpm. Variables, constants, $\Type$ and $\Kind$ are translations of themselves. The translations of $\lambda$-abstractions $\lambda x : A. ~t$, dependent types $\Pi x : A. ~B$ and applications $t ~u$ rely on the translations of $t$, $u$, $A$ and $B$. The most important part of the definition is that the translation is stable by transports: if $\overbar{t}$ is a translation of $t$, then $\transp ~p ~\overbar{t}$ is also a translation of $t$, with $p$ typically an equality. This relation captures all possible translations, but some are not correct as they may not be well typed. For instance, $\lambda x : \overbar{A}. ~\overbar{t}$ is not a valid translation of $\lambda x : A. ~t$ when the variable $x$ used in $\overbar{t}$ does not expect type $\overbar{A}$ but another translation $\overbar{A}'$.

\begin{definition}
The translation relation $\trad$ is defined by:
\begin{mathpar}
\inferrule*{ }{x ~\trad ~x}

\inferrule*{ }{c ~\trad ~c}

\inferrule*{ }{\Type ~\trad ~\Type}

\inferrule*{ }{\Kind ~\trad ~\Kind}

\inferrule*{\overbar{A} ~\trad ~A \\ \overbar{t} ~\trad ~t}{(\lambda x : \overbar{A}. ~\overbar{t}) ~\trad ~(\lambda x : A. ~t)}

\inferrule*{\overbar{A} ~\trad ~A \\ \overbar{B} ~\trad ~B}{(\Pi x : \overbar{A}. ~\overbar{B}) ~\trad ~(\Pi x : A. ~B)}

\inferrule*{\overbar{t} ~\trad ~t \\ \overbar{u} ~\trad ~u}{(\overbar{t} ~\overbar{u}) ~\trad ~(t ~u)}

\inferrule*{\overbar{t} ~\trad ~t}{(\transp ~p ~\overbar{t}) ~\trad ~t}
\end{mathpar}
where $p$ is an arbitrary term.
\end{definition}
Due to the typing rules of \lpm, transports for objects that have $\Type$ in their type do not exist. Therefore, the only well-typed translations of $\Type$, $\Kind$, $\Set$, $\Prf$ and $\El$ are themselves, and the well-typed translations of $\Pi x : A. ~B$ are of the form $\Pi x : \overbar{A}. ~\overbar{B}$ with $\overbar{A} ~\trad ~A$ and $\overbar{B} ~\trad ~B$.  It follows that a well-typed translation of a small type is still a small type. In particular, if $A \in \S$ then for any $\overbar{A}$ we have $\overbar{A} \coloneqq A$; if $\nu(A) \in \P$ then $\nu(\overbar{A}) \in \P$; and if $\nu(A) \in \E$ then $\nu(\overbar{A}) \in \E$.

We extend the relation to contexts and signatures. For each rewrite rule $\ell \lra r$ of a signature, we have $\bm{x} : \bm{B} \vdashr \ell : A$ and $\bm{x} : \bm{B} \vdashr r : A$, for some $\bm{B}$ and $A$, and some $\bm{x}$ representing the free variables of $\ell$. The translation of the rewrite rule $\ell \lra r$ is given by the equational axiom $\mathsf{eq}_{\ell r} : \Pi \bm{x} : \overbar{\bm{B}}. ~\heq{\overbar{\ell}}{\overbar{A}}{\overbar{r}}{\overbar{A}}$. Since the type of a term is not unique in \lpm, we have made a choice of $\bm{B}$ and $A$, which is not a problem as we will see in the proof of \cref{thm_elim}.
\begin{definition}
$\trad$ is defined on contexts and signatures by:
\begin{mathpar}
\inferrule*{ }{\langle \rangle ~\trad ~\langle \rangle}

\inferrule*{\overbar{\Gamma} ~\trad ~\Gamma \\ \overbar{A} ~\trad ~A}{(\overbar{\Gamma}, x : \overbar{A}) ~\trad ~(\Gamma, x : A)}

\inferrule*{\overbar{\Sigma} ~\trad ~\Sigma \\ \overbar{A} ~\trad ~A}{(\overbar{\Sigma}, c : \overbar{A}) ~\trad ~(\Sigma, c : A)}

\inferrule*{\overbar{\Sigma} ~\trad ~\Sigma \\ \overbar{\ell} ~\trad ~\ell \\ \overbar{r} ~\trad ~r \\ \overbar{\bm{B}} ~\trad ~\bm{B} \\ \overbar{A} ~\trad ~A}{(\overbar{\Sigma}, \mathsf{eq}_{\ell r} : \Pi \bm{x} : \overbar{\bm{B}}. ~\heq{\overbar{\ell}}{\overbar{A}}{\overbar{r}}{\overbar{A}}) ~\trad ~(\Sigma, \ell \lra r)}
\end{mathpar}
\end{definition}

\begin{lemma}
If $\overbar{t} ~\trad ~t$ and $\overbar{u} ~\trad ~u$ then $\overbar{t}[x \mapsto \overbar{u}] ~\trad ~t[x \mapsto u]$.
\end{lemma}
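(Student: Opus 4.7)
The plan is to prove this by straightforward induction on the derivation of $\overbar{t} \trad t$, using the standard Barendregt convention so that the bound variables of $\overbar{t}$ and $t$ are assumed distinct from $x$ and disjoint from the free variables of $\overbar{u}$ and $u$. Under that convention, substitution commutes with each term constructor in the expected way, and the rules defining $\trad$ are all closed under this commutation.

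First I would handle the atomic cases. For the variable rule, $\overbar{t} = y$ and $t = y$: if $y \neq x$ both substitutions leave the term unchanged and we reapply the variable rule, while if $y = x$ we get exactly $\overbar{u} \trad u$, which is our second hypothesis. For the constant and sort rules, the terms are invariant under substitution, so the conclusion is immediate by reapplying the same rule. For the binder cases ($\lambda$ and $\Pi$), the induction hypotheses give translations of the component types and bodies, and by the convention the substitution passes under the binder, so reapplying the binder rule gives the result. The application case is analogous.

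The only slightly subtle case is the transport rule: if $\overbar{t} = \transp\ p\ \overbar{t}'$ with $\overbar{t}' \trad t$, then since $p$ is an \emph{arbitrary} term in the definition of $\trad$, we may freely replace it by $p[x \mapsto \overbar{u}]$; the induction hypothesis gives $\overbar{t}'[x \mapsto \overbar{u}] \trad t[x \mapsto u]$ (noting that $t$ does not contain the transport marker, so $t[x \mapsto u]$ really is $t$ with $u$ substituted), and one application of the transport rule yields $\transp\ p[x \mapsto \overbar{u}]\ (\overbar{t}'[x \mapsto \overbar{u}]) \trad t[x \mapsto u]$, which is $(\transp\ p\ \overbar{t}')[x \mapsto \overbar{u}] \trad t[x \mapsto u]$ by the definition of substitution.

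There is no serious obstacle here; the only thing to keep track of is the asymmetry in the transport rule, where the source term $\overbar{t}$ carries a $\transp$ node that has no counterpart in $t$. This is handled by the observation above that since $t$ is unchanged by crossing a $\transp$ in $\overbar{t}$, the induction hypothesis applied to $\overbar{t}' \trad t$ already delivers the right right-hand side $t[x \mapsto u]$.
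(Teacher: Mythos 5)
Your proof is correct and follows essentially the same route as the paper: induction on the derivation of $\overbar{t} \trad t$, with the transport case resolved by the observation that $(\transp\ p\ \overbar{t}')[x \mapsto \overbar{u}] = \transp\ (p[x \mapsto \overbar{u}])\ (\overbar{t}'[x \mapsto \overbar{u}])$ and that $p$ is arbitrary in the transport rule. The paper's proof is just a terser statement of the same argument.
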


\begin{proof}
By induction on the derivation of $\overbar{t} ~\trad ~t$. For the case with the transport, we can prove that $(\transp ~p ~t)[x \mapsto u] = \transp ~p[x \mapsto u] ~t[x \mapsto u]$.
\qed
\end{proof}

\begin{definition}[Relation $\sim$]
We say that $t_1 \sim t_2$ when there exists some $t$ such that $t_1 ~\trad ~t$ and $t_2 ~\trad ~t$.
\end{definition}

\begin{lemma}
$\sim$ is an equivalence relation.
\end{lemma}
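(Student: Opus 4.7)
The plan is to prove reflexivity, symmetry, and transitivity of $\sim$ separately, with the last being the main work.

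For reflexivity, one constructs for any $\overbar{t}$ a witness $t$ with $\overbar{t} \trad t$ by structural induction on $\overbar{t}$: variables, constants, $\Type$, and $\Kind$ translate to themselves, and for $\lambda$-abstractions, $\Pi$-types, and applications we invoke the induction hypothesis on the subterms and apply the corresponding structural rule. Symmetry is immediate from the symmetric role of $t_1$ and $t_2$ in the definition of $\sim$.

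For transitivity, suppose $t_1 \trad s$, $t_2 \trad s$, $t_2 \trad s'$, and $t_3 \trad s'$. The idea is to define an erasure function $e$ on terms by recursion: the identity on variables, constants, and sorts, commuting with $\lambda$, $\Pi$, and general application, and satisfying $e(\transp ~p ~u) = e(u)$ whenever the term is syntactically a transport application (this clause taking priority over the generic application clause). Two companion facts are then established: (a) for every $\overbar{t}$, $\overbar{t} \trad e(\overbar{t})$, by structural induction on $\overbar{t}$ using the transport rule on transport applications and structural rules otherwise; and (b) if $\overbar{t} \trad t$ then $e(\overbar{t}) = e(t)$, by induction on the derivation of $\overbar{t} \trad t$. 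Applying (b) to the four given translations yields $e(t_1) = e(s) = e(t_2) = e(s') = e(t_3)$, and (a) then provides $t_1 \trad e(t_1)$ and $t_3 \trad e(t_3)$ with $e(t_1) = e(t_3)$, giving the common witness $e(t_1)$ for $t_1 \sim t_3$.

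The delicate case is the application clause in fact (b). If $\overbar{t} = \overbar{u} ~\overbar{v}$ is syntactically a transport application $\transp ~p ~\overbar{v}$, the derivation may use the generic application rule rather than the transport rule, producing $t = u ~v$ from a translation $\overbar{u} = \transp ~p \trad u$. One must verify that $u$ retains the shape of $\transp ~p'$ for some $p'$, so that $t = u ~v$ is itself a transport application and $e$ consistently erases its outer layer on both sides. This relies on inspecting the explicit construction of $\transp ~p$ given in Lemma~\ref{lemma_transp} (an application headed by $\leibp$ or $\leibe$, or a $\lambda$-abstraction, depending on the types) and observing that the transport rule cannot fire on $\transp ~p$ itself since it lacks a final argument. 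Consequently only the structural rules apply to $\overbar{u}$, and they preserve its top-level shape, reducing the remaining verification to routine uses of the induction hypotheses.
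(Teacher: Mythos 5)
Your decomposition agrees with the paper on reflexivity (structural induction, each term translating itself modulo erasure) and symmetry, but for transitivity you take a genuinely different route. The paper's proof is a one-liner resting on backward-uniqueness of the translation relation: if $t \trad u_1$ and $t \trad u_2$ then $u_1 = u_2$, so the two witnesses $s$ and $s'$ for $t_1 \sim t_2$ and $t_2 \sim t_3$ are literally equal and nothing more is needed. You instead construct an erasure function $e$ and prove (a) $\overbar{t} \trad e(\overbar{t})$ and (b) $\overbar{t} \trad t$ implies $e(\overbar{t}) = e(t)$, obtaining $e(t_1)$ as a canonical common witness. This buys a constructive description of the witness and avoids stating uniqueness, at the price of two auxiliary inductions; both arguments hinge on exactly the same delicate point, namely the syntactic overlap between the transport clause and the generic application clause, which you make explicit and the paper leaves implicit (its uniqueness claim needs the same analysis to rule out reading $\transp ~p ~\overbar{u}$ as an ordinary application of $\transp ~p$).

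One caveat on your handling of that delicate case: justifying it by the elaborated shape of transports from \cref{lemma_transp} is shaky, since under that reading a transport can be an application headed by $\leibp_A$, a $\lambda$-abstraction, or even $t$ itself when both types are in $\S$, so ``syntactically a transport application'' is not a well-defined side condition for $e$ and the clause ``$e$ commutes with $\lambda$'' would clash with the erasing clause. The robust version of your argument treats $\transp ~p ~\overbar{t}$ as distinguished syntax, from which inversion gives directly that $\transp ~p \trad u$ forces $u = \transp ~p'$, with no appeal to \cref{lemma_transp}. Note also that your case analysis omits the mirror situation where the right-hand side of an application instance is a transport application while the left-hand side is not; this only arises when the source term itself contains $\transp$, a situation both you and the paper implicitly exclude (source terms live in the source signature), and the paper's uniqueness fact would fail there too, so I regard this as inherited informality rather than a gap in your proof.
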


\begin{proof}
$\sim$ is reflexive, symmetric and transitive. When proving transitivity we exploit the fact that whenever $t ~\trad ~u_1$ and $t ~\trad ~u_2$, we have $u_1 = u_2$. Reflexivity is proved by induction on the term.
\qed
\end{proof}
An important result we need to prove is that two well-typed translations $t_1$ and $t_2$ of the same term $t$ are heterogeneously equal. By construction, both terms do not necessarily have the same type or the same context. We will always consider $\Gamma_1 \vdash t_1 : A_1$ and $\Gamma_2 \vdash t_2 : A_2$, where $\Gamma_1$ and $\Gamma_2$ have the same length and the same variables (with possibly different types). The equality between $t_1$ and $t_2$ must be typed in some context, but $\Gamma_1$ and $\Gamma_2$ are not sufficient. That is why we define a common context $\pack{\Gamma_1}{\Gamma_2}$ (written $\mathsf{Pack} ~\Gamma_1 ~\Gamma_2$ in the work of Winterhalter \al~\cite{transport}) by duplicating each variable and by assuming a witness of heterogeneous equality between these two duplicates. More precisely, we partially define $\pack{}{}$ by induction on small contexts:
\[ \pack{\langle \rangle}{\langle \rangle} \coloneqq \langle \rangle \]
\[ \pack{(\Gamma_1, x : A_1)}{(\Gamma_2, x : A_2)} \coloneqq \pack{\Gamma_1}{\Gamma_2}, x_1 : A_1[\gamma_1], x_2 : A_2[\gamma_2], p_x : \heq{x_1}{}{x_2}{} \]
where $\gamma_1$ substitutes variables $z$ by $z_1$ and $\gamma_2$ substitutes variables $z$ by $z_2$. We write $\gamma_{12}$ for the substitution that replaces the variables $z_1$ and $z_2$ by $z$ and the variable $p_z$ by $\refl ~z$.

\begin{lemma}
\label{lemma_2gamma}
If $\pack{\Gamma}{\Gamma} \vdash t : A$, then we can derive $\Gamma \vdash t[\gamma_{12}] : A[\gamma_{12}]$.
\end{lemma}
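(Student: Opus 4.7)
I would first strengthen the statement to accommodate a trailing context $\Delta$ disjoint from the packed variables, proving: if $\pack{\Gamma}{\Gamma}, \Delta \vdash t : A$ then $\Gamma, \Delta[\gamma_{12}] \vdash t[\gamma_{12}] : A[\gamma_{12}]$, where $\gamma_{12}$ is extended by the identity on $\Delta$. The original lemma is recovered by taking $\Delta = \langle\rangle$. The strengthening is needed because the packed context is not in tail position inside the induction step, so we need the induction hypothesis to accept extra assumptions at the end.

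The proof proceeds by induction on the length of $\Gamma$. The base case $\Gamma = \langle\rangle$ is immediate since $\pack{\langle\rangle}{\langle\rangle} = \langle\rangle$ and $\gamma_{12}$ is the identity substitution. For the inductive step, write $\Gamma = \Gamma', x : C$ and let $\gamma_1', \gamma_2', \gamma_{12}'$ denote the substitutions associated with $\Gamma'$. The hypothesis reads $\pack{\Gamma'}{\Gamma'}, x_1 : C[\gamma_1'], x_2 : C[\gamma_2'], p_x : \heq{x_1}{}{x_2}{}, \Delta \vdash t : A$. Applying the induction hypothesis to $\Gamma'$ with the enlarged suffix $\Delta' \coloneqq x_1 : C[\gamma_1'], x_2 : C[\gamma_2'], p_x : \heq{x_1}{}{x_2}{}, \Delta$ yields a judgment in the context $\Gamma', x_1 : C, x_2 : C, p_x : \heq{x_1}{C}{x_2}{C}, \Delta[\gamma_{12}']$, using that $C[\gamma_i'][\gamma_{12}'] = C$ because $\gamma_{12}'$ undoes $\gamma_i'$ on variables of $\Gamma'$, and $C$ only mentions such variables.

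Next I would apply the substitution lemma (\cref{lemma_subst}) three times, followed by a weakening, to collapse the triple $x_1, x_2, p_x$ into the single variable $x$: $(i)$ substitute $x_2 \mapsto x_1$, which is well typed since $x_1, x_2$ now share the type $C$, turning $p_x$ into an inhabitant of $\heq{x_1}{}{x_1}{}$; $(ii)$ substitute $p_x \mapsto \refl_C\ x_1$, well typed in $\Gamma', x_1 : C$; $(iii)$ weaken with a fresh $x : C$ inserted before $x_1$, and finally substitute $x_1 \mapsto x$. The composition of $\gamma_{12}'$ with these three substitutions acts on $x_1, x_2, p_x$ exactly as $\gamma_{12}$ does, so the final judgment reads $\Gamma', x : C, \Delta[\gamma_{12}] \vdash t[\gamma_{12}] : A[\gamma_{12}]$, as required.

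The main obstacle is choosing the correct order of substitutions: neither the naïve outside-in sequence ($x_1 \mapsto x$ first) nor the direct elimination of $p_x$ works from the start, because $x$ is not yet present in the context and $x_1, x_2$ initially have distinct types $C[\gamma_1']$ and $C[\gamma_2']$. Invoking the induction hypothesis first is what simultaneously unifies the types of $x_1$ and $x_2$ and removes the obstructions from within $\Delta$, and the weakening step before substituting $x_1 \mapsto x$ is the small technical move that actually introduces the desired variable $x$ into the context.
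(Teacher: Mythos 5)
Your proof is correct and takes essentially the same route as the paper: induction on the length of $\Gamma$, using the substitution lemma (\cref{lemma_subst}) to collapse each triple $x_1, x_2, p_x$ into the single variable $x$ with $\refl~x$. The strengthening with the trailing context $\Delta$, and the resulting order of substitutions ($x_2 \mapsto x_1$, then $p_x \mapsto \refl~x_1$, then $x_1 \mapsto x$ after weakening), is exactly the detail the paper's terse proof leaves implicit---without it the induction hypothesis cannot be applied under the last packed triple, where $x_1$ and $x_2$ still carry the distinct types $C[\gamma_1']$ and $C[\gamma_2']$.
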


\begin{proof}
We proceed by induction on the length of $\Gamma$. If we have $\pack{\langle \rangle}{\langle \rangle} \vdash t : A$ then by definition we have $\langle \rangle \vdash t : A$. Suppose that we have $\pack{(\Gamma, x : B)}{(\Gamma, x : B)} \vdash t : A$. We apply successively \cref{lemma_subst} to replace $x_2$ and $x_1$ by $x$ and then $p_x$ by $\refl ~x$. \qed
\end{proof}
The following lemma states that two translations of a same term are heterogeneously equal.

\begin{lemma}[Equal translations]
\label{lemma_t1t2}
Let $t_1 \sim t_2$ such that $\Gamma_1 \vdash t_1 : A_1$ and $\Gamma_2 \vdash t_2 : A_2$ with $\Gamma_1$ and $\Gamma_2$ small contexts.
\begin{enumerate}
\item If $\Gamma_1 \vdash A_1 : \Type$ and $\Gamma_2 \vdash A_2 : \Type$, then there exists some $p$ such that $\pack{\Gamma_1}{\Gamma_2} \vdash p : \heq{t_1[\gamma_1]}{A_1[\gamma_1]}{t_2[\gamma_2]}{A_2[\gamma_2]}$.
\item If $t_1$ and $t_2$ are small types, then there exists some $p$ such that $\pack{\Gamma_1}{\Gamma_2} \vdash p : \eqt{t_1[\gamma_1]}{t_2[\gamma_2]}$.
\end{enumerate}
\end{lemma}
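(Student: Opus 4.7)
The plan is to proceed by simultaneous induction on the shared preimage $t$ of $t_1 \trad t$ and $t_2 \trad t$, using inversion on both translation derivations. A preliminary reduction simplifies the analysis: whenever $t_i = \transp ~p ~t_i'$ with $t_i' \trad t$, \cref{lemma_transp} supplies $\eqtrans ~p ~t_i'$ witnessing $\heq{\transp ~p ~t_i'}{}{t_i'}{}$, so any chain of outer transports can be peeled off and later reintroduced via $\trans$ and $\sym$. I may therefore restrict to the case where neither $t_1$ nor $t_2$ begins with a $\transp$ and treat both items as a case analysis on the root constructor of $t$. The two items must be proved together, since the $\lambda$-case of (1) calls (2) on the domain, and the $\Pi$-case of (2) calls (1) on the components of the product.

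For the base cases, $t = x$ forces $t_1 = t_2 = x$, and the hypothesis $p_x : \heq{x_1}{}{x_2}{}$ baked into $\pack{\Gamma_1}{\Gamma_2}$ is exactly the witness required by item (1) after applying $\gamma_1$ and $\gamma_2$. The cases $t = c$, $t = \Type$, and $t = \Kind$ leave both translations syntactically equal, so $\refl$ suffices. For item (2), $\Type$ and $\Kind$ are vacuous as they are not small types, the variable case does not arise in a small context (whose variables never have $\Type$ for type), and the only small constant that is itself a type is $\Set$, for which $\eqt{\Set}{\Set} = \true$ has the canonical inhabitant $\lambda P.~\lambda z.~z$.

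The inductive cases invoke the congruence axioms of $\Sigma_{eq}$. For $t = u~v$, writing the stripped translations as $\overbar{u}_i ~\overbar{v}_i$, I combine the equalities produced by the induction hypotheses on $u$ and $v$ over $\pack{\Gamma_1}{\Gamma_2}$ using $\app$. For $t = \lambda x : A.~u$, I extend the pack context with $x_1, x_2, p_x$, apply item (2) of the IH on $A$ to obtain a $\kappa$-equality between $\overbar{A}_1$ and $\overbar{A}_2$, apply item (1) of the IH on $u$ in the extended context, and conclude with $\fun$. For $t = \Pi x : A.~B$ the target is always an instance of (2); I dispatch on the grammars that generate $\nu(\overbar{A}_i)$ and $\nu(\overbar{B}_i)$, which are preserved by translation (as noted after the definition of $\trad$), and invoke the appropriate variant of $\prod$ among $\prod_\arrd, \prod_\impd, \prod_\blpi, \prod_\fa$, or directly produce the inhabitant of $\true$ when all four components lie in $\S$.

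The main obstacle will be the bookkeeping of types after peeling off transports and moving to the pack context: the axioms $\app$, $\fun$ and $\prod$ demand precisely matched domains and codomains, so the implicit coercions between $A$, $\nu(A)$ and $\overbar{A}$ (all justified by $\equiv_{\beta\Sigma_{pre}}$ under the typed conversion of \cref{conv_lpcm}, together with the fact that well-typed translations of small types stay small with the same head grammar) must be threaded through each inductive step without misaligning $\Gamma_1$, $\Gamma_2$ and $\pack{\Gamma_1}{\Gamma_2}$ or the substitutions $\gamma_1$ and $\gamma_2$. No single step is deep, but keeping the contexts and substitutions synchronised through the mutual induction is where the bulk of the work lies.
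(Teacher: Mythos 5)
Your proposal follows essentially the same route as the paper's proof: induction over the common preimage (equivalently, over the derivation of $t_1 \sim t_2$), transports handled by $\eqtrans$ plus transitivity and symmetry, variables by the hypothesis $p_x$ packed into $\pack{\Gamma_1}{\Gamma_2}$, and the congruence axioms $\app$, $\fun$ and the $\prod$ family for the remaining constructors, with the same dispatch on the grammars $\S$, $\P$, $\E$ for $\Pi$-types. The one point to state explicitly is the application case under item (2): when $t_1 = \Prf ~a_1$ and $t_2 = \Prf ~a_2$ (or the analogous case with $\El$), the goal $\eqt{t_1[\gamma_1]}{t_2[\gamma_2]}$ unfolds by definition of $\kappa$ to $\heq{a_1[\gamma_1]}{}{a_2[\gamma_2]}{}$, which is item (1) of the induction hypothesis used directly rather than an instance of $\app$ (the latter would require an equality between objects of type $\Type$, which is ill-typed).
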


\begin{proof}

We proceed by induction on the derivation of $t_1 \sim t_2$. We show two interesting cases. The complete proof is available in \cref{appendix_t1t2}.
\begin{itemize}
\item \textsc{Transport} $(\transp ~p ~t_1) \sim t_2$

We have $\Gamma_1 \vdash \transp ~p ~t_1 : A_1$ and $\Gamma_2 \vdash t_2 : A_2$. By inversion of typing, we have $\Gamma_1 \vdash t_1 : A_1'$ and $\Gamma_1 \vdash p : \eqt{A_1'}{A_1}$. By induction there exists some $p_t$ such that $\pack{\Gamma_1}{\Gamma_2} \vdash p_t : \heq{t_1[\gamma_1]}{}{t_2[\gamma_2]}{}$. We also have $\Gamma_1 \vdash \eqtrans ~p ~t_1 : \heq{\transp ~p ~t_1}{}{t_1}{}$. We derive that $\pack{\Gamma_1}{\Gamma_2} \vdash (\eqtrans ~p ~t_1)[\gamma_1] : \heq{(\transp ~p ~t_1)[\gamma_1]}{}{t_1[\gamma_1]}{}$. We conclude using transitivity. \\

\item \textsc{Application} $(t_1 ~u_1) \sim (t_2 ~u_2)$

Suppose that $t_1 ~u_1$ and $t_2 ~u_2$ are small types. Then the only possible cases are $t_1 = t_2 = \Prf$ or $t_1 = t_2 = \El$. If $t_1 = t_2 = \Prf$, then we have $\Gamma_1 \vdash \Prf ~u_1 : \Type$ and $\Gamma_2 \vdash \Prf ~u_2 : \Type$. Since $\eqt{\Prf ~u_1}{\Prf ~u_2} = \heq{u_1}{}{u_2}{}$, the result is simply the induction hypothesis $\pack{\Gamma_1}{\Gamma_2} \vdash p : \heq{u_1[\gamma_1]}{}{u_2[\gamma_2]}{}$. We proceed similarly for $\El ~u_1 \sim \El ~u_2$.

Suppose that we have $\Gamma_1 \vdash t_1 ~u_1 : T_1$ and
$\Gamma_2 \vdash t_2 ~u_2 : T_2$ with $\Gamma \vdash T_1 : \Type$ and $\Gamma \vdash T_2 : \Type$. Then by inversion of typing we have
$\Gamma_1 \vdash u_1 : B_1$ and
$\Gamma_2 \vdash u_2 : B_2$ and
$\Gamma_1 \vdash t_1 : \Pi x : A_1. ~B_1$ and
$\Gamma_2 \vdash t_2 : \Pi x : A_2. ~B_2$, with $T_1 \equiv_{\beta\Sigma_{pre}} B_1[x \mapsto u_1]$ and $T_2 \equiv_{\beta\Sigma_{pre}} B_2[x \mapsto u_2]$.
By induction hypotheses, we have $\pack{\Gamma_1}{\Gamma_2} \vdash p_t : \heq{t_1[\gamma_1]}{}{t_2[\gamma_2]}{}$ and $\pack{\Gamma_1}{\Gamma_2} \vdash p_u : \heq{u_1[\gamma_1]}{}{u_2[\gamma_2]}{}$. We conclude using $\app$. \qed
\end{itemize}
\end{proof}

\subsection{Translation of Judgments}

In \cref{posstrans} we have seen all the possible translations for \emph{terms}. However, the only translations that matter are the translations of \emph{judgments}: context formation judgments and typing judgments.

\begin{definition}
For any $\vdashr \Gamma$ we define a set $\llbracket \vdashr \Gamma \rrbracket$ of valid judgments such that $\judgcont{\overbar{\Gamma}}{\Gamma}$ if and only if $\overbar{\Gamma} ~\trad ~\Gamma$.
For any $\Gamma \vdashr t : A$ we define a set $\llbracket \Gamma \vdashr t : A \rrbracket$ of valid judgments such that $\judg{\overbar{\Gamma}}{\overbar{t}}{\overbar{A}}{\Gamma}{t}{A}$ if and only if $\judgcont{\overbar{\Gamma}}{\Gamma}$, $\overbar{t} ~\trad ~t$ and $\overbar{A} ~\trad ~A$.
\end{definition}
We are now able to prove that it is possible to switch between two translations of a small type.

\begin{lemma}[Switching translations]
\label{lemma_2trad}
Suppose that we have $A$ a small type, $\judg{\overbar{\Gamma}}{\overbar{t}}{\overbar{A}}{\Gamma}{t}{A}$ and $\judg{\overbar{\Gamma}}{\overbar{A}'}{\Type}{\Gamma}{A}{\Type}$ with $\overbar{\Gamma}$ a small context. Then there exists $\overbar{t}'$ such that $\judg{\overbar{\Gamma}}{\overbar{t}'}{\overbar{A}'}{\Gamma}{t}{A}$.
\end{lemma}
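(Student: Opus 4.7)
The plan is to construct $\overbar{t}'$ as a transport of $\overbar{t}$ along an equality between the two translations $\overbar{A}$ and $\overbar{A}'$ of the same small type $A$. All the necessary ingredients are already available: the stability of small types under (well-typed) translation, the equal translations lemma (\cref{lemma_t1t2}), and the transport lemma (\cref{lemma_transp}).

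First I would observe that since $A$ is small and $\overbar{A}, \overbar{A}' \trad A$ are well typed with type $\Type$, both $\overbar{A}$ and $\overbar{A}'$ are again small types: as noted after the definition of $\trad$, well-typed translations preserve the grammar class of a small type ($\S$, $\P$ or $\E$). Next I would apply \cref{lemma_t1t2} (part 2) to the pair $\overbar{A} \sim \overbar{A}'$ with $\Gamma_1 = \Gamma_2 = \overbar{\Gamma}$ (which is small by assumption), obtaining some $p$ such that
\[
\pack{\overbar{\Gamma}}{\overbar{\Gamma}} \vdash p : \eqt{\overbar{A}[\gamma_1]}{\overbar{A}'[\gamma_2]}.
\]

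Then I would push this equality back to $\overbar{\Gamma}$ using \cref{lemma_2gamma}: applying the substitution $\gamma_{12}$ yields $\overbar{\Gamma} \vdash p[\gamma_{12}] : \eqt{\overbar{A}}{\overbar{A}'}$, since $\overbar{A}[\gamma_1][\gamma_{12}] = \overbar{A}$ and $\overbar{A}'[\gamma_2][\gamma_{12}] = \overbar{A}'$ by the very definition of $\gamma_1$, $\gamma_2$ and $\gamma_{12}$. Now \cref{lemma_transp} is applicable because $\overbar{\Gamma} \vdash \overbar{t} : \overbar{A}$ and we have an equality of the right form between the small types $\overbar{A}$ and $\overbar{A}'$. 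I set
\[
\overbar{t}' \coloneqq \transp ~p[\gamma_{12}] ~\overbar{t},
\]
so that $\overbar{\Gamma} \vdash \overbar{t}' : \overbar{A}'$. Finally, $\overbar{t}' \trad t$ follows immediately from the transport clause of the translation relation, applied to the hypothesis $\overbar{t} \trad t$. Combined with $\judgcont{\overbar{\Gamma}}{\Gamma}$ and $\overbar{A}' \trad A$, this gives the desired $\judg{\overbar{\Gamma}}{\overbar{t}'}{\overbar{A}'}{\Gamma}{t}{A}$.

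The only mildly delicate point is the bookkeeping between the $\gamma_1/\gamma_2$-substituted statement produced by \cref{lemma_t1t2} and the unsubstituted equality needed to feed into \cref{lemma_transp}; once one notices that \cref{lemma_2gamma} exactly bridges this gap (the context $\pack{\overbar{\Gamma}}{\overbar{\Gamma}}$ is formed from two identical copies of the same small context), the argument reduces to a straightforward assembly of previously proved lemmas, with no induction required in this proof itself.
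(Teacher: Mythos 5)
Your proof is correct and follows essentially the same route as the paper: obtain $\overbar{A} \sim \overbar{A}'$, apply \cref{lemma_t1t2} in the duplicated context, pull the equality back with \cref{lemma_2gamma}, and transport $\overbar{t}$ along it via \cref{lemma_transp}, noting that the transport clause of $\trad$ keeps $\overbar{t}'$ a translation of $t$. The only cosmetic difference is that the paper treats the case $\nu(A) \in \S$ separately (taking $\overbar{t}' \coloneqq \overbar{t}$ directly), whereas you handle it uniformly, which works since \cref{lemma_t1t2} and \cref{lemma_transp} already cover that case trivially.
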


\begin{proof}
If $\nu(A) \in \S$, then $\overbar{A} \coloneqq A$ and $\overbar{A}' \coloneqq A$, and we take $\overbar{t}' \coloneqq \overbar{t}$.
If $\nu(A) \in \P$, then $\nu(\overbar{A}), \nu(\overbar{A}') \in \P$ (this is similar for $\E$). As $\overbar{A}$ and $\overbar{A}'$ are two translations of $A$, we have $\overbar{A} \sim \overbar{A}'$. From \cref{lemma_t1t2}, we have $\pack{\overbar{\Gamma}}{\overbar{\Gamma}} \vdash p : \eqt{\overbar{A}[\gamma_1]}{\overbar{A}'[\gamma_2]}$. Using \cref{lemma_2gamma} we obtain $\overbar{\Gamma} \vdash p[\gamma_{12}] : \eqt{\overbar{A}}{\overbar{A}'}$. Using \cref{lemma_transp}, there exists some $\transp ~p[\gamma_{12}] ~\overbar{t} ~\trad ~t$ (since $\overbar{t} ~\trad ~t$) such that $\overbar{\Gamma} \vdash \transp ~p[\gamma_{12}] ~\overbar{t} : \overbar{A}'$. \qed
\end{proof}

\subsection{Translation of Theories}

Now that we have translated terms and judgments, we want to translate theories, so that the translation of every provable judgment in the source theory is provable in the target theory. The target theory $\T^{ax} = \Sigma_{pre} \cup \Sigma_{eq} \cup \overbar{\Sigma}_{\T}$ is obtained by adding the axioms of equality to the signature, and by translating $\Sigma_{\T}$. To do so, we translate each typed constant and rewrite rule one by one. At the end, the rewrite rules of $\Sigma_{\T}$ have been replaced by equational axioms.

The paramount result of this paper is the following theorem. The first item concerns context formation. The second item is about the translation of typing judgments. The third item focuses on convertible contexts. The fourth and fifth items are about the conversion rules. It is worth noting that in the second item we use the universal quantifier on $\overbar{\Gamma}$ instead of using the existential quantifier. We have opted for the universal quantifier so we can obtain the induction hypotheses for a common context.

\begin{theorem}[Elimination of the rewrite rules]
\label{thm_elim}
Let a theory $\T = \Sigma$ in \lpm such that $\T$ is a theory with prelude encoding and such that all the derivations considered are small derivations. There exists a signature $\overbar{\Sigma}_{\T} ~\trad ~\Sigma_{\T}$ such that the theory $\T^{ax} = \Sigma_{pre} \cup \Sigma_{eq} \cup \overbar{\Sigma}_{\T}$ satisfies:
\begin{enumerate}
\item If $\vdashr \Gamma$, then there exists $\judgcont{\overbar{\Gamma}}{\Gamma}$.

\item If $\Gamma \vdashr t : A$, then for every $\judgcont{\overbar{\Gamma}}{\Gamma}$ there exist $\overbar{t}$ and $\overbar{A}$ such that $\judg{\overbar{\Gamma}}{\overbar{t}}{\overbar{A}}{\Gamma}{t}{A}$.

\item If $(\vdashr \Gamma_1) \equiv (\vdashr \Gamma_2)$, then for every $\judgcont{\overbar{\Gamma}_1}{\Gamma_1}$ and $\judgcont{\overbar{\Gamma}_2}{\Gamma_2}$, we have $\vdash \pack{\overbar{\Gamma}_1}{\overbar{\Gamma}_2}$.

\item If $(\Gamma_1 \vdashr u_1 : A_1) \equiv (\Gamma_2 \vdashr u_2 : A_2)$ with $\Gamma_1 \vdashr A_1 : \Type$ and $\Gamma_2 \vdashr A_2 : \Type$, then for every $\judgcont{\overbar{\Gamma}_1}{\Gamma_1}$ and $\judgcont{\overbar{\Gamma}_2}{\Gamma_2}$, we have $\judg{\overbar{\Gamma}_1}{\overbar{u}_1}{\overbar{A}_1}{\Gamma_1}{u_1}{A_1}$ and $\judg{\overbar{\Gamma}_2}{\overbar{u}_2}{\overbar{A}_2}{\Gamma_2}{u_2}{A_2}$ and there exists some $p$ such that $\pack{\overbar{\Gamma}_1}{\overbar{\Gamma}_2} \vdash p : \heq{\overbar{u}_1[\gamma_1]}{\overbar{A}_1[\gamma_1]}{\overbar{u}_2[\gamma_2]}{\overbar{A}_2[\gamma_2]}$.

\item If $(\Gamma_1 \vdashr u_1 : \Type) \equiv (\Gamma_2 \vdashr u_2 : \Type)$, then for every $\judgcont{\overbar{\Gamma}_1}{\Gamma_1}$ and $\judgcont{\overbar{\Gamma}_2}{\Gamma_2}$, we have $\judg{\overbar{\Gamma}_1}{\overbar{u}_1}{\Type}{\Gamma_1}{u_1}{\Type}$ and $\judg{\overbar{\Gamma}_2}{\overbar{u}_2}{\Type}{\Gamma_2}{u_2}{\Type}$ and there exists some $p$ such that $\pack{\overbar{\Gamma}_1}{\overbar{\Gamma}_2} \vdash p : \eqt{\overbar{u}_1[\gamma_1]}{\overbar{u}_2[\gamma_2]}$.
\end{enumerate}
\end{theorem}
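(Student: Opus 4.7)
I would prove the five items \emph{simultaneously} by induction on the source derivation, with a case split on the last rule of \cref{typ_lpcm,conv_lpcm}. Before starting, I fix once and for all the target signature $\overbar{\Sigma}_\T$: for each $c : A \in \Sigma_\T$ I pick a translation $\overbar{A}$ extracted from the provided small derivation of $\vdash A : \Type$; for each $\ell \lra r \in \Sigma_\T$ I pick translations $\overbar{\bm{B}}$, $\overbar{A}$, $\overbar{\ell}$, $\overbar{r}$ from the small derivations of $\bm{x} : \bm{B} \vdashr \ell : A$ and $\bm{x} : \bm{B} \vdashr r : A$, and I add the axiom $\mathsf{eq}_{\ell r} : \Pi \bm{x} : \overbar{\bm{B}}. ~\heq{\overbar{\ell}}{\overbar{A}}{\overbar{r}}{\overbar{A}}$. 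Any apparent non-canonicity in this choice is later absorbed by \cref{lemma_t1t2} and \cref{lemma_2trad}.

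\textbf{Items 1 and 2.} For the structural rules (\textsc{Empty}, \textsc{Sort}, \textsc{Decl}, \textsc{Const}, \textsc{Var}, \textsc{Prod}, \textsc{Abs}, \textsc{App}), I would read off the translations from the induction hypotheses under the universally quantified $\overbar{\Gamma}$, then assemble them following the structure of the rule. Because some typed subterms appear in several premises and the induction hypotheses may produce incompatible translations (typically the shared domain type in \textsc{Abs} and \textsc{App}), I would invoke \cref{lemma_2trad} to realign every translation into a single coherent one before forming the conclusion. The delicate case is \textsc{Conv}: from the induction hypothesis I obtain $\overbar{\Gamma} \vdash \overbar{t} : \overbar{A}$ and, via item~5 applied on $\pack{\overbar{\Gamma}}{\overbar{\Gamma}}$ followed by \cref{lemma_2gamma}, a witness $\overbar{\Gamma} \vdash q : \eqt{\overbar{A}}{\overbar{B}}$; \cref{lemma_transp} then gives $\overbar{\Gamma} \vdash \transp ~q ~\overbar{t} : \overbar{B}$, which is still a translation of $t$ by the last clause defining $\trad$.

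\textbf{Items 3, 4 and 5.} I would proceed by induction on the convertibility derivation, producing the heterogeneous-equality witness in each case from the axioms of $\Sigma_{eq}$ and the induction hypotheses. \textsc{ConvRefl}, \textsc{ConvSym} and \textsc{ConvTrans} correspond to $\refl$, $\sym$ and $\trans$; for transitivity I pre-align the middle term to a single translation using \cref{lemma_t1t2} and \cref{lemma_2trad}. \textsc{ConvDecl} extends the pack context with the heterogeneous-equality binder $p_x$ of the new variable. \textsc{ConvConst} and \textsc{ConvVar} are discharged by $\refl$. \textsc{ConvProd} and \textsc{ConvAbs} apply $\prod$ and $\fun$, case-splitting on which of the grammars $\S, \P, \E$ generates $\nu(A_i)$ and $\nu(B_i)$ so as to select the appropriate variant ($\arrd$, $\impd$, $\blpi$, $\fa$, or the purely-$\S$ degenerate case). \textsc{ConvApp} is handled by $\app$. \textsc{ConvBeta} reduces to $\refl$ since $\beta$-reduction is still part of conversion in the target theory, provided substitution commutes with translation up to a transport (which follows from the substitution lemma for $\trad$). \textsc{ConvConv} uses \cref{lemma_2trad} to move one translation onto the other type. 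The characteristic case is \textsc{ConvRule}: instantiating $\mathsf{eq}_{\ell r}$ at translations of the substitution $\bm{t}$ yields the desired equality at the \emph{fixed} translations $\overbar{\ell}, \overbar{r}, \overbar{\bm{B}}, \overbar{A}$ chosen above, after which \cref{lemma_t1t2} transports the result to whichever translations the ambient goal requires.

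\textbf{Main obstacle.} The principal difficulty is the pervasive \emph{translation mismatch} between subterms: each induction hypothesis delivers \emph{some} translation of its subterm, whereas the inductive step typically needs translations that agree with previously chosen ones (on shared subterms, on the types of bound variables, on the middle term in transitivity, on the two sides of a rewrite rule). \cref{lemma_t1t2} and \cref{lemma_2trad} are the two instruments that bridge these mismatches by producing heterogeneous equalities and transports that realign translations; the bulk of the proof consists in threading them through every case above and checking that the constructed witnesses remain well typed in the appropriate $\pack{}{}$-context after the substitutions $\gamma_1$, $\gamma_2$, or $\gamma_{12}$ are applied. A secondary subtlety is keeping track of which context ($\overbar{\Gamma}$, $\pack{\overbar{\Gamma}}{\overbar{\Gamma}}$, or $\pack{\overbar{\Gamma}_1}{\overbar{\Gamma}_2}$) a given proof-term lives in, and weakening appropriately when chaining equalities across the five mutually inductive items.
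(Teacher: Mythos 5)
Your overall strategy is the paper's: a mutual induction on the source derivations proving the five items, with \cref{lemma_transp}, \cref{lemma_t1t2}, \cref{lemma_2trad} and \cref{lemma_2gamma} doing exactly the realignment work you describe, and with the same case analysis on the grammars $\S,\P,\E$ in the \textsc{ConvProd}/\textsc{ConvAbs} cases. The one genuine gap is in how you obtain $\overbar{\Sigma}_\T$. You ``fix once and for all'' the target signature by ``picking a translation $\overbar{A}$ extracted from the provided small derivation'' of each $\vdash A : \Type$ and of each rule's typing judgments, and you dismiss the remaining worry as mere non-canonicity to be absorbed by \cref{lemma_t1t2}/\cref{lemma_2trad}. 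But the problem is not canonicity: a translation of $A$ cannot simply be read off the source derivation, because it must be \emph{well typed in the target theory}, and producing such a well-typed translation (inserting transports wherever the source derivation used conversion by rules of $\Sigma_\T$) is precisely what item~2 of the theorem delivers. Those transports are built from the equational axioms $\mathsf{eq}_{\ell r}$ of the rules occurring \emph{earlier} in the signature, so the definition of $\overbar{\Sigma}_\T$ is circular unless it is stratified. The paper resolves this with a second induction: it first proves the five items assuming a translated signature, then constructs $\overbar{\Sigma}_\T$ by induction on the length of $\Sigma_\T$, translating the $(i{+}1)$-th entry inside the theory $\Sigma_{pre}\cup\Sigma_{eq}\cup\overbar{\Sigma}^i_\T$ by applying the already-proved items (together with \cref{lemma_2trad}). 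Your write-up needs this stratification, or an equivalent well-founded organization, to make the up-front choice of $\overbar{\Sigma}_\T$ legitimate.

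Two smaller points. First, \textsc{ConvVar} cannot be discharged by $\refl$: in $\pack{\overbar{\Gamma}_1}{\overbar{\Gamma}_2}$ the two copies $x_1$ and $x_2$ are distinct variables with possibly different translated types, and the witness is the packed hypothesis $p_x$ (you in fact introduce it correctly in \textsc{ConvDecl}, so this is an internal inconsistency rather than a missing idea). Second, your treatment of \textsc{ConvRule} by instantiating the fixed $\mathsf{eq}_{\ell r}$ and then bridging to the ambient translations via \cref{lemma_t1t2} is a workable alternative to the paper's device, which instead pre-processes derivations (using \textsc{Conv}/\textsc{ConvConv}) so that every instance of \textsc{ConvRule} for a given rule uses the same types $\vec{B}$ and $A$; your version trades that pre-processing for extra checks that the hypotheses of \cref{lemma_t1t2} (small contexts, both sides well typed) hold at each such use. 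Also note that substitution commutes with $\trad$ exactly, so no transport is needed in \textsc{ConvBeta}.
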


\begin{proof}
The proof of the five items is done by induction on the typing derivations, assuming the existence of $\overbar{\Sigma}_{\T}$. We show three relevant cases. The complete proof can be found in \cref{appendix_elim}.

\begin{itemize}
\item \textsc{Prod:}
\begin{mathpar}
\inferrule*{\Gamma \vdashr A : \Type \\ \Gamma, x : A \vdashr B : s}{\Gamma \vdashr \Pi x : A. ~B : s}
\end{mathpar}

Take $\judgcont{\overbar{\Gamma}}{\Gamma}$. By induction hypothesis, we have $\judg{\overbar{\Gamma}}{\overbar{A}}{\Type}{\Gamma}{A}{\Type}$. We have $(\overbar{\Gamma}, x : \overbar{A}) ~\trad ~(\Gamma, x : A)$ and we know that the only translation of sort $s$ is itself, therefore by induction hypothesis we have $\judg{\overbar{\Gamma}, x : \overbar{A}}{\overbar{B}}{s}{\Gamma, x : A}{B}{s}$. We conclude that $\overbar{\Gamma} \vdash \Pi x : \overbar{A}. ~\overbar{B} : s$ using the \textsc{Prod} rule. \\

\item \textsc{Conv:}
\begin{mathpar}
\inferrule*{\Gamma \vdashr t : A \\ (\Gamma \vdashr A : s) \equiv (\Gamma \vdashr B : s)}{\Gamma \vdashr t : B}
\end{mathpar}

Take $\judgcont{\overbar{\Gamma}}{\Gamma}$. As we consider small derivations, either $A$ is a small type or $A$ and $B$ are the same type.

If $A$ is a small type, then by induction hypothesis we have $\pack{\overbar{\Gamma}}{\overbar{\Gamma}} \vdash p : \eqt{\overbar{A}[\gamma_1]}{\overbar{B}[\gamma_2]}$. By \cref{lemma_2gamma} we obtain $\overbar{\Gamma} \vdash p[\gamma_{12}] : \eqt{\overbar{A}}{\overbar{B}}$. By \cref{lemma_2trad} and induction hypothesis we have $\judg{\overbar{\Gamma}}{\overbar{t}}{\overbar{A}}{\Gamma}{t}{A}$. Thanks to \cref{lemma_transp}, there exists some $\overbar{t}'$ such that $\judg{\overbar{\Gamma}}{\overbar{t}'}{\overbar{B}}{\Gamma}{t}{B}$.

If $A$ and $B$ are the same type, then no conversion is needed and the result is simply given the induction hypothesis $\overbar{\Gamma} \vdash \overbar{t} : \overbar{A}$. \\

\item \textsc{ConvRefl:}
\begin{mathpar}
\inferrule*{\Gamma \vdashr u : A}{(\Gamma \vdashr u : A) \equiv (\Gamma \vdashr u : A)}
\end{mathpar}

Take $\judgcont{\overbar{\Gamma}}{\Gamma}$. By induction hypothesis, we have $\judg{\overbar{\Gamma}}{\overbar{u}}{\overbar{A}}{\Gamma}{u}{A}$.

If $\Gamma \vdashr A : \Type$, then we build $\pack{\overbar{\Gamma}}{\overbar{\Gamma}} \vdash p : \heq{\overbar{u}[\gamma_1]}{}{\overbar{u}[\gamma_2]}{}$ using all the congruence rules of $\approx$.

We proceed similarly for the case $A = \Type$.
\end{itemize}
The existence of $\overbar{\Sigma}_{\T}$ is proved by induction on the length of $\Sigma_{\T}$, using the previous five items and $\langle \rangle ~\trad ~\langle \rangle$. \qed
\end{proof}

\begin{corollary}[Preservation]
If $\vdashr t : A$ and $\judg{}{A}{s}{}{A}{s}$, then there exists $\overbar{t}$ such that $\vdash \overbar{t} : A$.
\end{corollary}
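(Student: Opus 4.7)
The plan is to instantiate item~2 of \cref{thm_elim} at the empty context and then use the hypothesis $\judg{}{A}{s}{}{A}{s}$, together with \cref{lemma_2trad}, to adjust the resulting translation so that it has type \emph{exactly} $A$ rather than some other translation $\overbar{A}$ of $A$.

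First, since $\langle \rangle \trad \langle \rangle$, we have $\judgcont{\langle \rangle}{\langle \rangle}$. Applying item~2 of \cref{thm_elim} to $\vdashr t : A$ therefore produces $\overbar{t}$ and $\overbar{A}$ such that $\overbar{t} \trad t$, $\overbar{A} \trad A$, and $\vdash \overbar{t} : \overbar{A}$. In general $\overbar{A}$ need not be syntactically $A$: the induction may have inserted transports inside $A$ as it traversed the derivation.

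Next, I use the hypothesis $\judg{}{A}{s}{}{A}{s}$, which unfolds to $A \trad A$ and $\vdash A : s$. Thus $A$ is itself a valid well-typed translation of $A$. If $s = \Kind$, then the grammar of translations forces $\overbar{A} = A$ (the only translations of $\Kind$-sorted expressions built from $\Type$, $\Kind$, $\Set$, $\Prf$, $\El$ and $\Pi$ are themselves, rigidly), so $\overbar{t}$ already answers the claim. Otherwise $s = \Type$, and since $A \trad A$ the type $A$ is small (a non-small type cannot be its own translation once the interior contains any non-rigid subterm needing a transport), so \cref{lemma_2trad} applies: from $\judg{\langle \rangle}{\overbar{t}}{\overbar{A}}{\langle \rangle}{t}{A}$ and $\judg{\langle \rangle}{A}{\Type}{\langle \rangle}{A}{\Type}$, it yields some $\overbar{t}'$ with $\vdash \overbar{t}' : A$, as desired.

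The only delicate point is justifying the switch from $\overbar{A}$ to $A$ as the type of the term. This is exactly what \cref{lemma_2trad} is designed for, and its hypotheses are met because $A \trad A$ ensures the ambient type is small, while both $A$ and $\overbar{A}$ are translations of the same source type $A$; the lemma then produces a new translation $\overbar{t}'$ (built by transporting $\overbar{t}$ along an equality $\eqt{\overbar{A}}{A}$ obtained from \cref{lemma_t1t2}) of the correct type. Nothing further is needed.
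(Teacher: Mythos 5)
Your argument follows essentially the same route as the paper: instantiate item~2 of \cref{thm_elim} at the empty context, then use \cref{lemma_2trad} with the hypothesis $\judg{}{A}{s}{}{A}{s}$ supplying the second translation $\overbar{A}' \coloneqq A$, which is exactly the paper's two-step proof. One caveat on your auxiliary justifications: the inference ``since $A \trad A$ the type $A$ is small'' is not valid, because $\trad$ is reflexive on \emph{every} transport-free term (the paper proves reflexivity of $\sim$ by induction on arbitrary terms); for instance $\Prf~a \ra \Set \ra \Prf~b$ is its own translation but is not small. The smallness side condition required by \cref{lemma_2trad} really comes from the standing small-derivation assumption of \cref{thm_elim} (the judgment $\vdashr t : A$ being small forces $A$ small, or $t$/$A$ to fall in the rigid prelude cases), a point the paper itself leaves implicit. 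Similarly, your claim that $s = \Kind$ rigidly forces $\overbar{A} = A$ is too strong as stated, since a kind $\Pi x : B.~\Type$ could in principle pick up transports inside $\overbar{B}$; again it is the smallness discipline, not the sort alone, that rules this out. These remarks concern only your side justifications; the core argument coincides with the paper's.
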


\begin{proof}
By \cref{thm_elim} we have $\judg{}{\overbar{t}'}{\overbar{A}'}{}{t}{A}$. Using \cref{lemma_2trad} with $\overbar{A} \coloneqq A$, we have some $\overbar{t}$ such that $\judg{}{\overbar{t}}{A}{}{t}{A}$. \qed
\end{proof}
We directly derive the two following conservativity and consistency results. We say that a theory $\T_2$ is conservative over a theory $\T_1$ when every formula in the common language of $\T_1$ and $\T_2$ that is provable in $\T_2$ is also provable in $\T_1$.

\begin{corollary}[Conservativity]
$\T$ is a conservative extension of $\T^{ax}$.
\end{corollary}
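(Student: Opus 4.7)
The plan is to derive conservativity as a one-line consequence of the Preservation corollary. Fix a formula $A$ in the common language of $\T$ and $\T^{ax}$, that is, a term built only from the constants shared between $\Sigma_{pre} \cup \Sigma_\T$ and $\Sigma_{pre} \cup \overbar{\Sigma}_\T$ (so in particular containing no occurrence of $\transp$, of the heterogeneous equality axioms of $\Sigma_{eq}$, or of the generated $\mathsf{eq}_{\ell r}$), and suppose that $A$ is typable as a type in both theories, so that both $\vdashr A : s$ and $\vdash A : s$ hold. Assume further that $A$ is provable in $\T$, i.e.\ there is some $t$ with $\vdashr t : A$; the goal is to produce $\overbar{t}$ with $\vdash \overbar{t} : A$ in $\T^{ax}$.

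To invoke the Preservation corollary I need to exhibit $\judg{}{A}{s}{}{A}{s}$. This reduces to two obligations: the typing $\vdash A : s$ in $\T^{ax}$, which is available because $A$ belongs to the common language by assumption, and the relation $A ~\trad ~A$. The latter holds for any term by a straightforward induction on its structure: variables, constants and sorts are translations of themselves, while products, abstractions and applications propagate the property through their premises, and at no point is one forced to pick the transport clause of $\trad$. In particular, no recourse to \cref{lemma_2trad} is needed on the nose, since the identity translation of $A$ is already the desired witness.

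With both obligations discharged, the Preservation corollary immediately yields an $\overbar{t}$ with $\vdash \overbar{t} : A$ in $\T^{ax}$, witnessing the provability of $A$ in the target theory. The substance of the result lies entirely in \cref{thm_elim} and its Preservation corollary; the only delicate point here is to pin down the right notion of common language so that both $A ~\trad ~A$ and well-typedness of $A$ in $\T^{ax}$ come for free. Once this is in place, the statement follows in a single step.
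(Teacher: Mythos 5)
Your proof is correct and matches the paper's intent: the paper derives conservativity directly from the Preservation corollary, exactly as you do, with the hypotheses $A \trad A$ (reflexivity of the translation relation on transport-free terms, already noted in the paper when proving $\sim$ is an equivalence) and $\vdash A : s$ packaged into the notion of formula in the common language. Your explicit discharge of these two side conditions is a faithful unfolding of the paper's one-line derivation.
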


\begin{corollary}[Relative consistency]
If $\T^{ax}$ is consistent then $\T$ is also consistent.
\end{corollary}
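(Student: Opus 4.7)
The plan is to obtain this corollary as an immediate contrapositive of the Preservation corollary just above it. Consistency of a theory with prelude encoding should be understood as the absence of a closed inhabitant of some canonical empty type, for instance $\bot_{\T} \coloneqq \Pi P : \El~\o.~\Prf~P$ (equivalently, $\Prf~\bot$ once $\bot$ is defined in the prelude). The key observation is that this type, together with the derivation $\vdash \bot_{\T} : \Type$, is built entirely from constants of $\Sigma_{pre}$, and is therefore its own translation: one checks directly from the definition of $\trad$ that $\bot_{\T} ~\trad~ \bot_{\T}$ and that the corresponding derivation $\judg{}{\bot_{\T}}{\Type}{}{\bot_{\T}}{\Type}$ holds in $\T^{ax}$, since the constants $\Pi$, $\El$, $\o$ and $\Prf$ are present in $\Sigma_{pre}$ and unaffected by the translation.

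First I would argue by contraposition: assume $\T$ is inconsistent, so that there is a closed derivation $\vdashr t : \bot_{\T}$ in the source theory. Then I would invoke the Preservation corollary with $A \coloneqq \bot_{\T}$ and $s \coloneqq \Type$, using the observation above that the hypothesis $\judg{}{\bot_{\T}}{\Type}{}{\bot_{\T}}{\Type}$ is satisfied with $\overbar{A} = A = \bot_{\T}$. Preservation then yields some $\overbar{t}$ with $\vdash \overbar{t} : \bot_{\T}$ in $\T^{ax}$, witnessing the inconsistency of $\T^{ax}$. Contraposing gives the desired implication.

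There is essentially no obstacle here: the work has all been done in \cref{thm_elim} and in the Preservation corollary, whose role is exactly to exhibit a translated proof of the same type when that type lies in the common language. The only minor point to be careful about is to pin down the notion of consistency being used, namely that one fixes a distinguished type whose inhabitation encodes falsity and whose syntactic shape is stable under $\trad$; any such canonical type (e.g.\ $\Pi P : \El~\o.~\Prf~P$, or $\Prf~\bot$ when $\bot$ is encoded as in the predicate logic example) works, so the corollary follows uniformly. The Conservativity corollary stated just before can be used as an alternative, and even stronger, route: if $\bot_{\T}$ were provable in $\T$, then since it belongs to the common language of $\T$ and $\T^{ax}$, conservativity would force it to be provable in $\T^{ax}$ as well, contradicting the assumed consistency of $\T^{ax}$. \qed
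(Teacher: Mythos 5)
Your proposal is correct and matches the paper's (implicit) argument: the paper derives this corollary directly from Preservation/Conservativity, exactly as you do, by noting that a canonical empty type such as $\Pi P : \El~\o.~\Prf~P$ lies in the common prelude language, is its own translation, and hence any closed inhabitant in $\T$ transfers to $\T^{ax}$, giving the result by contraposition. No further comment is needed.
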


\section{Conclusion}

\paragraph{Discussion.} In this paper, we showed that it is possible to replace user-defined rewrite rules by equational axioms, in the case of the \lpcm. This result works for theories with prelude encoding---which is satisfied by expressive theories such as predicate logic and set theory---and for small derivations---which is in practice the case. So as to replace rewrite rules by equational axioms, we have defined a heterogeneous equality with standard axioms---reflexivity, symmetry, transitivity, Leibniz principle---and congruences for each constructor. At the end, the theory with rewrite rules is a conservative extension of the theory with axioms.

\paragraph{Related work.} The similar problem of the translation from an extensional system to an intensional system has been investigated by Oury~\cite{oury}. He proposed a translation from the Extensional Calculus of Constructions to the Calculus of Inductive Constructions with additional axioms that define a heterogeneous equality. Winterhalter, Sozeau and Tabareau provided a translation from extensional type theory to intensional type theory~\cite{transport,WinterhalterFormalMetaType}. They took advantage of the presence of dependent pairs to encode a heterogeneous equality, unlike Oury who defined it with axioms.

In this paper, we have shown the existence of a translation from a theory with rewrite rules to a theory with equational axioms. Technical challenges appear as we are not in an extensional type system. In particular, Oury and Winterhalter \al had a homogeneous equality in their source theory and introduce a heterogeneous equality in the target theory. In this work, the source theory does not contain a homogeneous equality, and the target theory only contains a heterogeneous equality.

The major difference with previous works is that we are in a logical framework without an infinite hierarchy of sorts $s_i : s_{i+1}$ for $i \in \mathbb{N}$. In \lpm, we only have $\Type : \Kind$, which is the reason why we cannot define an equality between types. As such an equality is of paramount importance in the transports, we have considered a subclass of types---called small types---for which we can define an equality. However, it is worth noting that the sorts of \lpm allowed a simplification: by construction, there is no transports on types, so the translation of a dependent function type is directly a dependent function type.

%
%
\bibliographystyle{splncs04}
\bibliography{biblio}

\newpage

\appendix
\section{Proof of \cref{lemma_t1t2}}
\label{appendix_t1t2}

\begingroup
\def\thelemma{\ref*{lemma_t1t2}}
\begin{lemma}[Equal translations]
Let $t_1 \sim t_2$ such that $\Gamma_1 \vdash t_1 : A_1$ and $\Gamma_2 \vdash t_2 : A_2$ with $\Gamma_1$ and $\Gamma_2$ small contexts.
\begin{enumerate}
\item If $\Gamma_1 \vdash A_1 : \Type$ and $\Gamma_2 \vdash A_2 : \Type$, then there exists some $p$ such that $\pack{\Gamma_1}{\Gamma_2} \vdash p : \heq{t_1[\gamma_1]}{A_1[\gamma_1]}{t_2[\gamma_2]}{A_2[\gamma_2]}$.
\item If $t_1$ and $t_2$ are small types, then there exists some $p$ such that $\pack{\Gamma_1}{\Gamma_2} \vdash p : \eqt{t_1[\gamma_1]}{t_2[\gamma_2]}$.
\end{enumerate}
\end{lemma}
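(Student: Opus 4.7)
The plan is to argue simultaneously for items 1 and 2 by induction on the shared translation target $t$, where $t_1 \trad t$ and $t_2 \trad t$, with a preliminary case-split that peels any outermost transport from either side. For the transport layer, if $t_1 = \transp\ p\ t_1'$ with $t_1' \trad t$, inversion of typing yields $\Gamma_1 \vdash t_1' : A_1'$ and $\Gamma_1 \vdash p : \eqt{A_1'}{A_1}$; the inner induction hypothesis applied to $t_1' \sim t_2$ delivers an equality between $t_1'[\gamma_1]$ and $t_2[\gamma_2]$, and $\eqtrans\ p\ t_1'$ from \cref{lemma_transp}, substituted through $\gamma_1$, supplies the equality between $(\transp\ p\ t_1')[\gamma_1]$ and $t_1'[\gamma_1]$. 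A single use of $\trans$ concludes, while the symmetric case on the $t_2$ side also uses $\sym$.

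After peeling, $t_1$ and $t_2$ share the head constructor of $t$. For item 1 the remaining cases are variables, constants, applications, and abstractions. A variable $t = x$ is witnessed by the variable $p_x$ that $\pack{\Gamma_1}{\Gamma_2}$ introduces precisely for this purpose. A constant $t = c$ reduces to $\refl_A\ c$. The application case, already sketched in the excerpt, chains the induction hypotheses for the function and argument through the $\app$ axiom. For $t = \lambda x : A.\ s$ with $t_1 : \Pi x : \overbar{A_1}.\ \overbar{B_1}$ and similarly for $t_2$, I would invoke item 2 on $A_1 \sim A_2$, both small as domains of well-typed translated $\Pi$-types, to obtain the equality of domains that $\fun$ requires, and invoke item 1 on $s_1 \sim s_2$ inside the extended pack context $\pack{\Gamma_1, x : A_1}{\Gamma_2, x : A_2}$ to provide the pointwise codomain equality; $\fun$ then yields the heterogeneous equality of the two abstractions.

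For item 2 the surviving cases are $t = \Set$, which admits a $\true$ witness since $\Set \in \S$; $t = \Prf\ a$ and $t = \El\ a$, where $\kappa$ unfolds directly to heterogeneous equality on the argument and the conclusion is the induction hypothesis for item 1 on $a_1 \sim a_2$; and $t = \Pi x : A.\ B$. The $\Pi$-type case proceeds by case analysis mirroring the clauses defining $\kappa$: in the four genuine cases one applies the corresponding axiom $\prod_{\arrd}$, $\prod_{\impd}$, $\prod_{\blpi}$ or $\prod_{\fa}$, fed by item 2 on the domain and by item 1 or item 2 on the codomain inside the extended pack context; the mixed $\S$/$\P \cup \E$ cases collapse to a single recursive call, and the all-$\S$ case returns a $\true$ witness.

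The main obstacle will be the abstraction case of item 1 together with the $\Pi$-type case of item 2: these are the only places where items 1 and 2 feed back into each other, and the only places where one works under a binder in an extended pack context, tracking the interaction of $\gamma_1, \gamma_2$ with the freshly introduced $x_1, x_2, p_x$. This interdependency forces items 1 and 2 to be proved by a single simultaneous induction on $t$ rather than sequentially. \qed
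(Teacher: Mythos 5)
Your proposal is correct and follows essentially the same route as the paper's proof: a simultaneous induction over the structure of the translation relation (the paper phrases it as induction on the derivation of $t_1 \sim t_2$, which amounts to your induction on the common term with transports peeled via $\eqtrans$ and $\trans$), with the same case analysis — $p_x$ for variables, $\refl$ for constants, $\app$ for applications, $\fun$ for abstractions fed by item 2 on the domains and item 1 under the extended pack context, and the $\kappa$-driven case split with the $\prod$ axioms for $\Pi$-types. The mutual dependency between items 1 and 2 that you flag is exactly how the paper's proof is organized.
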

\addtocounter{lemma}{-1}
\endgroup

\begin{proof}

We proceed by induction on the derivation of $t_1 \sim t_2$.
\begin{itemize}

\item \textsc{Variable} $x \sim x$

$p$ is given by the variable $p_x$ that belongs to $\pack{\Gamma_1}{\Gamma_2}$. \\

\item \textsc{Constant} $c \sim c$

If $c : A$ and $A : \Type$, then we take $p \coloneqq \refl ~c$.
The only constant $c : \Type$ is $\Set \in \S$. Therefore, we have nothing to prove. \\

\item \textsc{Transport} $(\transp ~p ~t_1) \sim t_2$

We have $\Gamma_1 \vdash \transp ~p ~t_1 : A_1$ and $\Gamma_2 \vdash t_2 : A_2$. By inversion of typing, we have $\Gamma_1 \vdash t_1 : A_1'$ and $\Gamma_1 \vdash p : \eqt{A_1'}{A_1}$. By induction there exists some $p_t$ such that $\pack{\Gamma_1}{\Gamma_2} \vdash p_t : \heq{t_1[\gamma_1]}{}{t_2[\gamma_2]}{}$. We also have $\Gamma_1 \vdash \eqtrans ~p ~t_1 : \heq{\transp ~p ~t_1}{}{t_1}{}$. We derive that $\pack{\Gamma_1}{\Gamma_2} \vdash (\eqtrans ~p ~t_1)[\gamma_1] : \heq{(\transp ~p ~t_1)[\gamma_1]}{}{t_1[\gamma_1]}{}$. We conclude using transitivity. \\

\item \textsc{Application} $(t_1 ~u_1) \sim (t_2 ~u_2)$

Suppose that $t_1 ~u_1$ and $t_2 ~u_2$ are small types. Then the only possible cases are $t_1 = t_2 = \Prf$ or $t_1 = t_2 = \El$. If $t_1 = t_2 = \Prf$, then we have $\Gamma_1 \vdash \Prf ~u_1 : \Type$ and $\Gamma_2 \vdash \Prf ~u_2 : \Type$. Since $\eqt{\Prf ~u_1}{\Prf ~u_2} = \heq{u_1}{}{u_2}{}$, the result is simply the induction hypothesis $\pack{\Gamma_1}{\Gamma_2} \vdash p : \heq{u_1[\gamma_1]}{}{u_2[\gamma_2]}{}$. We proceed similarly for $\El ~u_1 \sim \El ~u_2$.

Suppose that we have $\Gamma_1 \vdash t_1 ~u_1 : T_1$ and
$\Gamma_2 \vdash t_2 ~u_2 : T_2$ with $\Gamma \vdash T_1 : \Type$ and $\Gamma \vdash T_2 : \Type$. Then by inversion of typing we have
$\Gamma_1 \vdash u_1 : B_1$ and
$\Gamma_2 \vdash u_2 : B_2$ and
$\Gamma_1 \vdash t_1 : \Pi x : A_1. ~B_1$ and
$\Gamma_2 \vdash t_2 : \Pi x : A_2. ~B_2$, with $T_1 \equiv_{\beta\Sigma_{pre}} B_1[x \mapsto u_1]$ and $T_2 \equiv_{\beta\Sigma_{pre}} B_2[x \mapsto u_2]$.
By induction hypotheses, we have $\pack{\Gamma_1}{\Gamma_2} \vdash p_t : \heq{t_1[\gamma_1]}{}{t_2[\gamma_2]}{}$ and $\pack{\Gamma_1}{\Gamma_2} \vdash p_u : \heq{u_1[\gamma_1]}{}{u_2[\gamma_2]}{}$. We conclude using $\app$. \\

\item \textsc{Abstraction} $(\lambda x : A_1. ~t_1) \sim (\lambda x : A_2. ~t_2)$

Suppose that we have $\Gamma_1 \vdash \lambda x : A_1. ~t_1 : T_1$ and
$\Gamma_2 \vdash \lambda x : A_2. ~t_2 : T_2$. Then by inversion of typing we have $\Gamma_1 \vdash A_1 : \Type$ and
$\Gamma_2 \vdash A_2 : \Type$ and
$\Gamma_1, x : A_1 \vdash t_1 : B_1$ and
$\Gamma_2, x : A_2 \vdash t_2 : B_2$, with $T_1 \equiv_{\beta\Sigma_{pre}} \Pi x : A_1. ~B_1$ and $T_2 \equiv_{\beta\Sigma_{pre}} \Pi x : A_2. ~B_2$.
By induction hypothesis, we have $\pack{\Gamma_1}{\Gamma_2} \vdash p_A : \eqt{A_1[\gamma_1]}{A_2[\gamma_2]}$. By induction hypothesis, we have $p_t : \heq{t_1[\gamma_1, x \mapsto x_1]}{B_1[\gamma_1, x \mapsto x_1]}{t_2[\gamma_2, x \mapsto x_2]}{B_2[\gamma_2, x \mapsto x_2]}$ in the context $\pack{\Gamma_1}{\Gamma_2}, x_1 : A_1[\gamma_1], x_2 : A_2[\gamma_2], p_x : \heq{x_1}{}{x_2}{}$. We conclude using $\fun$. \\

\item \textsc{Product} $(\Pi x : A_1. ~B_1) \sim (\Pi x : A_2. ~B_2)$

If $\nu(\Pi x : A_1. ~B_1), \nu(\Pi x : A_2. ~B_2) \in \S$, then we take $\pack{\Gamma_1}{\Gamma_2} \vdash \lambda P : \El ~\o. ~\lambda h : \Prf ~P. ~h : \true$.

If $\nu(\Pi x : A_1. ~B_1), \nu(\Pi x : A_2. ~B_2) \in \P$ with $\nu(A_1)$, $\nu(A_2) \in \S$ and $\nu(B_1)$, $\nu(B_2) \in \P$, then the result can be derived from the induction hypothesis $\Gamma, x : \Set \vdash p_B : \eqt{B_1}{B_2}$ using $\fun$.

If $\nu(\Pi x : A_1. ~B_1), \nu(\Pi x : A_2. ~B_2) \in \P$ with $\nu(A_1)$, $\nu(A_2) \in \P$ and $\nu(B_1)$, $\nu(B_2) \in \S$, then the result is given by the induction hypothesis $\pack{\Gamma_1}{\Gamma_2} \vdash p_A : \eqt{A_1}{A_2}$.

If $\nu(\Pi x : A_1. ~B_1), \nu(\Pi x : A_2. ~B_2) \in \P$ with $\nu(A_1), \nu(A_2), \nu(B_1), \nu(B_2) \in \P$, then we necessarily have $\nu(A_1) = \Prf ~a_1$, $\nu(A_2) = \Prf ~a_2$, $\nu(B_1) = \Prf ~b_1$ and $\nu(B_2) = \Prf ~b_2$. We have
$\Gamma_1 \vdash A_1 : \Type$ and
$\Gamma_2 \vdash A_2 : \Type$ and
$\Gamma_1, x : A_1 \vdash B_1 : \Type$ and
$\Gamma_2, x : A_2 \vdash B_2 : \Type$ and
$\Gamma_1 \vdash \Pi x : A_1. ~B_1 : \Type$ and
$\Gamma_2 \vdash \Pi x : A_2. ~B_2 : \Type$.
By induction hypotheses, we have $\pack{\Gamma_1}{\Gamma_2} \vdash p_A : \heq{a_1[\gamma_1]}{}{a_2[\gamma_2]}{}$ and $\pack{\Gamma_1}{\Gamma_2}, x_1 : A_1[\gamma_1], x_2 : A_2[\gamma_2], p_x : \heq{x_1}{}{x_2}{} \vdash p_B : \heq{b_1[\gamma_1, x \mapsto x_1]}{}{b_2[\gamma_2, x \mapsto x_2]}{}$. Using the appropriate $\prod$, we are able to build equality $\eqt{\Pi x : A_1. ~B_1}{\Pi x : A_2. ~B_2} \coloneqq \heq{(a_1[\gamma_1] \impd (\lambda x_1 : \Prf ~a_1[\gamma_1]. ~b_1[\gamma_1, x \mapsto x_1]))}{}{(a_2[\gamma_2] \impd (\lambda x_2 : \Prf ~a_2[\gamma_2]. ~b_2[\gamma_2, x \mapsto x_2]))}{}$.

The other cases are treated similarly. \qed
\end{itemize}
\end{proof}

\section{Proof of \cref{thm_elim}}
\label{appendix_elim}

\begingroup
\def\thetheorem{\ref*{thm_elim}}
\begin{theorem}[Elimination of the rewrite rules]
Let a theory $\T = \Sigma$ in \lpm such that $\T$ is a theory with prelude encoding and such that all the derivations considered are small derivations. There exists a signature $\overbar{\Sigma}_{\T} ~\trad ~\Sigma_{\T}$ such that the theory $\T^{ax} = \Sigma_{pre} \cup \Sigma_{eq} \cup \overbar{\Sigma}_{\T}$ satisfies:
\begin{enumerate}
\item If $\vdashr \Gamma$, then there exists $\judgcont{\overbar{\Gamma}}{\Gamma}$.

\item If $\Gamma \vdashr t : A$, then for every $\judgcont{\overbar{\Gamma}}{\Gamma}$ there exist $\overbar{t}$ and $\overbar{A}$ such that $\judg{\overbar{\Gamma}}{\overbar{t}}{\overbar{A}}{\Gamma}{t}{A}$.

\item If $(\vdashr \Gamma_1) \equiv (\vdashr \Gamma_2)$, then for every $\judgcont{\overbar{\Gamma}_1}{\Gamma_1}$ and $\judgcont{\overbar{\Gamma}_2}{\Gamma_2}$, we have $\vdash \pack{\overbar{\Gamma}_1}{\overbar{\Gamma}_2}$.

\item If $(\Gamma_1 \vdashr u_1 : A_1) \equiv (\Gamma_2 \vdashr u_2 : A_2)$ with $\Gamma_1 \vdashr A_1 : \Type$ and $\Gamma_2 \vdashr A_2 : \Type$, then for every $\judgcont{\overbar{\Gamma}_1}{\Gamma_1}$ and $\judgcont{\overbar{\Gamma}_2}{\Gamma_2}$, we have $\judg{\overbar{\Gamma}_1}{\overbar{u}_1}{\overbar{A}_1}{\Gamma_1}{u_1}{A_1}$ and $\judg{\overbar{\Gamma}_2}{\overbar{u}_2}{\overbar{A}_2}{\Gamma_2}{u_2}{A_2}$ and there exists some $p$ such that $\pack{\overbar{\Gamma}_1}{\overbar{\Gamma}_2} \vdash p : \heq{\overbar{u}_1[\gamma_1]}{\overbar{A}_1[\gamma_1]}{\overbar{u}_2[\gamma_2]}{\overbar{A}_2[\gamma_2]}$.

\item If $(\Gamma_1 \vdashr u_1 : \Type) \equiv (\Gamma_2 \vdashr u_2 : \Type)$, then for every $\judgcont{\overbar{\Gamma}_1}{\Gamma_1}$ and $\judgcont{\overbar{\Gamma}_2}{\Gamma_2}$, we have $\judg{\overbar{\Gamma}_1}{\overbar{u}_1}{\Type}{\Gamma_1}{u_1}{\Type}$ and $\judg{\overbar{\Gamma}_2}{\overbar{u}_2}{\Type}{\Gamma_2}{u_2}{\Type}$ and there exists some $p$ such that $\pack{\overbar{\Gamma}_1}{\overbar{\Gamma}_2} \vdash p : \eqt{\overbar{u}_1[\gamma_1]}{\overbar{u}_2[\gamma_2]}$.
\end{enumerate}
\end{theorem}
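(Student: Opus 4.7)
The plan is to prove items~1--5 simultaneously by induction on the typing or convertibility derivation, with the existence of $\overbar{\Sigma}_\T$ established by an outer induction on the length of $\Sigma_\T$: at each step we add either a constant $c : A$---translated after invoking item~2 on $\vdashr A : s$ to obtain $\overbar{A}$---or a rewrite rule $\ell \lra r$, translated into $\mathsf{eq}_{\ell r} : \Pi \bm{x} : \overbar{\bm{B}}.~\heq{\overbar{\ell}}{\overbar{A}}{\overbar{r}}{\overbar{A}}$ using items~2 and~4 on the derivations $\bm{x} : \bm{B} \vdashr \ell : A$ and $\bm{x} : \bm{B} \vdashr r : A$. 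Note that the universal quantifier on $\overbar{\Gamma}$ in item~2 is crucial here: in items~3--5 we must fix $\overbar{\Gamma}_1$ and $\overbar{\Gamma}_2$ \emph{first}, so that a common pack context is available, and only then invoke item~2 on the typing premises.

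For items~1--2, most cases are direct: \textsc{Empty}, \textsc{Sort}, \textsc{Var} and \textsc{Const} are immediate, and \textsc{Decl}, \textsc{Prod}, \textsc{Abs} and \textsc{App} follow by applying the induction hypothesis to each premise and rebuilding the corresponding target rule (using the substitution lemma to handle the dependency in \textsc{App}). The crucial case is \textsc{Conv}: item~5 gives an equality $p$ between $\overbar{A}$ and $\overbar{B}$ in $\pack{\overbar{\Gamma}}{\overbar{\Gamma}}$, which \cref{lemma_2gamma} pulls back to $\overbar{\Gamma}$; then \cref{lemma_2trad}, used to align the translation of $t$ with the particular translation of $A$ appearing in $p$, combined with \cref{lemma_transp}, supplies the transport $\transp ~p[\gamma_{12}] ~\overbar{t} : \overbar{B}$. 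For items~3--5, \textsc{ConvRefl} reduces to \cref{lemma_t1t2} applied to the two translations of $u$; \textsc{ConvSym} and \textsc{ConvTrans} use $\sym$ and $\trans$; the congruence cases (\textsc{ConvProd}, \textsc{ConvAbs}, \textsc{ConvApp}, \ldots) mirror the corresponding cases of \cref{lemma_t1t2} via the axioms $\app$, $\fun$ and $\prod$. \textsc{ConvBeta} relies on the fact that $\beta$-reduction is already internal to the target through $\equiv_{\beta\Sigma_{pre}}$, so the two translations coincide up to $\beta$ and one closes with $\refl$ plus \cref{lemma_2trad}. Finally, \textsc{ConvRule} is precisely where $\mathsf{eq}_{\ell r}$ discharges its duty: instantiated at the translated substitution $\overbar{\bm{t}}$, it yields the required heterogeneous equality between $\overbar{\ell}[\bm{x} \mapsto \overbar{\bm{t}}]$ and $\overbar{r}[\bm{x} \mapsto \overbar{\bm{t}}]$, modulo a final alignment of translations by \cref{lemma_2trad}.

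The main obstacle will be the interplay between the non-uniqueness of types in \lp and the non-uniqueness of translations: an induction hypothesis typically produces a translation $\overbar{A}'$ of the ambient type that differs from the $\overbar{A}$ appearing in the surrounding derivation, and these mismatches must be systematically resolved by \cref{lemma_2trad}, which itself depends on \cref{lemma_t1t2} and on the equality $\kappa$ between small types. A subtler point arises in \textsc{ConvRule}: when building $\overbar{\Sigma}_\T$ we fix once and for all a particular translation of the rewrite rule relative to one specific derivation of $\bm{x} : \bm{B} \vdashr \ell : A$, whereas the occurrences of the rule in the source derivation may be typed by different but convertible instances of $A$; item~5 together with \cref{lemma_transp} is precisely what allows these syntactic discrepancies to be bridged. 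One also has to check that the induction is well-founded jointly across typing and convertibility, since \textsc{Conv} and \textsc{ConvConv} refer to each other; using the size of the derivation tree as the induction measure suffices, as each premise is strictly smaller than its conclusion.
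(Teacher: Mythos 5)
Your overall skeleton matches the paper's proof: a first phase proving items~1--5 simultaneously by induction on the (small) derivations assuming $\overbar{\Sigma}_\T$, a second phase building $\overbar{\Sigma}_\T$ by induction on the length of $\Sigma_\T$ using those items, the \textsc{Conv} case handled by item~5 plus \cref{lemma_2gamma}, \cref{lemma_2trad} and \cref{lemma_transp}, the congruence cases by $\app$, $\fun$, $\prod$, and \textsc{ConvBeta} absorbed by $\equiv_{\beta\Sigma_{pre}}$ in the target. The one place where your proposal diverges in substance, and where it has a genuine gap, is \textsc{ConvRule}. You correctly observe that $\mathsf{eq}_{\ell r}$ is created from one fixed choice of $\bm{B}$ and $A$, while an instance of \textsc{ConvRule} in the source derivation may carry different (merely convertible) types $\bm{B}'$, $A'$; but your proposed repair---``item~5 together with \cref{lemma_transp} bridges the discrepancy''---is not available to the structural induction. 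To transport the translated arguments $\overbar{\bm{t}}$ from a translation of $\bm{B}'$ to the fixed $\overbar{\bm{B}}$ you need a proof of $\eqt{\overbar{\bm{B}'}}{\overbar{\bm{B}}}$, and item~5 only produces such a proof from a convertibility \emph{derivation} $(\cdots \vdashr \bm{B}' : \Type) \equiv (\cdots \vdashr \bm{B} : \Type)$, which is not a premise of the \textsc{ConvRule} instance; \cref{lemma_2trad} does not help either, since it aligns two translations of the \emph{same} type, not translations of two convertible types. Invoking the theorem on a freshly constructed convertibility derivation would break the well-foundedness you rely on. The paper resolves this by a pre-processing step you are missing: the source derivations are first modified (using \textsc{Conv} and \textsc{ConvConv}) so that every instance of \textsc{ConvRule} for a given rule uses exactly the $\bm{B}$ and $A$ chosen when forming $\mathsf{eq}_{\ell r}$, after which the axiom applies directly.

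Two smaller omissions, both repairable along the paper's lines: \textsc{ConvRule} also fires for the rules of $\Sigma_{pre}$, for which there is no $\mathsf{eq}_{\ell r}$; these instances must be discharged by hand, using the shape of $\nu$ and the congruence axioms (e.g.\ for $\El ~(x \arrd y) \lra \Pi z : \El ~x. ~\El ~(y ~z)$ both sides have the same small form, so reflexivity-style congruence suffices). And in \textsc{ConvTrans} the two induction hypotheses yield two \emph{different} translations of the middle term $v$, so besides $\trans$ you need \cref{lemma_t1t2} to relate them, together with the observation (an auxiliary lemma in the paper) that being typed by $\Type$ or being $\Type$ itself is preserved along convertibility, so that items~4 and~5 are invoked coherently. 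Your closing remark about non-uniqueness of translations shows you saw the right phenomenon, but for these cases it must be instantiated explicitly.
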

\addtocounter{theorem}{-1}
\endgroup

\begin{proof}
From the source theory $\T = \Sigma_{pre} \cup \Sigma_{\T}$ we need to form the target theory $\T^{ax} = \Sigma_{pre} \cup \Sigma_{eq} \cup \overbar{\Sigma}_{\T}$, where $\overbar{\Sigma}_{\T} ~\trad ~\Sigma_{\T}$.

We prove the theorem in two steps: in a first step we admit the existence of $\overbar{\Sigma}_{\T}$ and we prove the five items; in a second step we prove the existence of $\overbar{\Sigma}_{\T}$.

Suppose that we have a translation $\overbar{\Sigma}_{\T}$. The five items of the theorem are proved by induction on the derivations inside $\T$.

We modify the derivations so that for each rewrite rule $\ell \lra r \in \Sigma_\T$, every instance of \textsc{ConvRule} is used with the same types $\vec{B}$ and $A$. This can be done using \textsc{Conv} and \textsc{ConvConv} rules. This pre-processing of the derivations simplify the induction case \textsc{ConvRule}.

\begin{itemize}

\item \textsc{Empty:}
\begin{mathpar}
\inferrule*{ }{\vdashr \langle \rangle}
\end{mathpar}

We have $\judgcont{\langle \rangle}{\langle \rangle}$. \\

\item \textsc{Decl:}
\begin{mathpar}
\inferrule*[right={$x \not\in \Gamma$}]{\vdashr \Gamma \\ \Gamma \vdashr A : s}{\vdashr \Gamma, x : A}
\end{mathpar}

By the first induction hypothesis, there exists $\judgcont{\overbar{\Gamma}}{\Gamma}$. Therefore, using the second induction hypothesis, there exists $\judg{\overbar{\Gamma}}{\overbar{A}}{s}{\Gamma}{A}{s}$. By definition, as $x \notin \Gamma$, we have $x \notin \overbar{\Gamma}$. Using the \textsc{Decl} rule, we can derive $\judgcont{\overbar{\Gamma}, x : \overbar{A}}{\Gamma, x : A}$. \\

\item \textsc{Sort:}
\begin{mathpar}
\inferrule*{\vdashr \Gamma}{\Gamma \vdashr \Type : \Kind}
\end{mathpar}

Suppose that there exists $\judgcont{\overbar{\Gamma}}{\Gamma}$. We conclude that $\overbar{\Gamma} \vdash \Type : \Kind$ using the \textsc{Sort} rule. \\

\item \textsc{Const:}
\begin{mathpar}
\inferrule*[right={$c : A \in \Sigma$}]{\vdashr \Gamma \\ \vdashr A : s}{\Gamma \vdashr c : A}
\end{mathpar}

Suppose that there exists $\judgcont{\overbar{\Gamma}}{\Gamma}$.

If $c : A \in \Sigma_{pre}$, then $\overbar{A} \coloneqq A$. Using the \textsc{Const} rule we derive that $\overbar{\Gamma} \vdash c : \overbar{A}$.

If $c : A \in \Sigma_{\T}$, then we have $c : \overbar{A} \in \overbar{\Sigma}_\T$ for some $\overbar{A} ~\trad ~A$. Using the \textsc{Const} rule we derive $\overbar{\Gamma} \vdash c : \overbar{A}$, since $\vdash \overbar{A} : s$. \\

\item \textsc{Var:}
\begin{mathpar}
\inferrule*[right={$x : A \in \Gamma$}]{\vdashr \Gamma}{\Gamma \vdashr x : A}
\end{mathpar}

Suppose that there exists $\judgcont{\overbar{\Gamma}}{\Gamma}$. As $x : A \in \Gamma$, we have $x : \overbar{A} \in \overbar{\Gamma}$ for some $\overbar{A} ~\trad ~A$. Then we derive that $\overbar{\Gamma} \vdash x : \overbar{A}$ using the \textsc{Var} rule.\\

\item \textsc{Prod:}
\begin{mathpar}
\inferrule*{\Gamma \vdashr A : \Type \\ \Gamma, x : A \vdashr B : s}{\Gamma \vdashr \Pi x : A. ~B : s}
\end{mathpar}

Suppose that there exists $\judgcont{\overbar{\Gamma}}{\Gamma}$. By induction hypothesis, we have $\judg{\overbar{\Gamma}}{\overbar{A}}{\Type}{\Gamma}{A}{\Type}$. We have $(\overbar{\Gamma}, x : \overbar{A}) ~\trad ~(\Gamma, x : A)$ and we know that the only translation of sort $s$ is itself, therefore by induction hypothesis we have $\judg{\overbar{\Gamma}, x : \overbar{A}}{\overbar{B}}{s}{\Gamma, x : A}{B}{s}$. We conclude that $\overbar{\Gamma} \vdash \Pi x : \overbar{A}. ~\overbar{B} : s$ using the \textsc{Prod} rule. \\

\item \textsc{Abs:}
\begin{mathpar}
\inferrule*{\Gamma \vdashr A : \Type \\ \Gamma, x : A \vdashr B : s \\ \Gamma, x : A \vdashr t : B}{\Gamma \vdashr \lambda x : A. ~t : \Pi x : A. ~B}
\end{mathpar}

Suppose that there exists $\judgcont{\overbar{\Gamma}}{\Gamma}$. By induction hypothesis, there exists $\judg{\overbar{\Gamma}}{\overbar{A}}{\Type}{\Gamma}{A}{\Type}$. As $(\overbar{\Gamma}, x : \overbar{A}) ~\trad ~(\Gamma, x : A)$ and by induction hypotheses, we have $\judg{\overbar{\Gamma}, x : \overbar{A}}{\overbar{B}}{s}{\Gamma, x : A}{B}{s}$ and $\judg{\overbar{\Gamma}, x : \overbar{A}}{\overbar{t}'}{\overbar{B}'}{\Gamma, x : A}{t}{B}$.

We know that the derivation is a small derivation. If $B$ is a small type, then we apply \cref{lemma_2trad} to obtain $\overbar{t}$ such that $\judg{\overbar{\Gamma}, x : \overbar{A}}{\overbar{t}}{\overbar{B}}{\Gamma, x : A}{t}{B}$. We conclude that $\overbar{\Gamma} \vdash \lambda x : \overbar{A}. ~\overbar{t} : \Pi x : \overbar{A}. ~\overbar{B}$ using the \textsc{Abs} rule.

If $B$ is not a small type, then the only other possibility is $t : B \in \Sigma_{pre}$. In that case we have $\overbar{B} = \overbar{B}' = B$. We conclude using the \textsc{Abs} rule. \\

\item \textsc{App:}
\begin{mathpar}
\inferrule*{\Gamma \vdashr t : \Pi x : A. ~B \\ \Gamma \vdashr u : A}{\Gamma \vdashr t ~u : B[x \mapsto u]}
\end{mathpar}

Suppose that there exists $\judgcont{\overbar{\Gamma}}{\Gamma}$. By the first induction hypothesis we have $\judg{\overbar{\Gamma}}{\overbar{t}}{\Pi x : \overbar{A}. ~\overbar{B}}{\Gamma}{t}{\Pi x : A. ~B}$. By the second induction hypothesis (and by \cref{lemma_2trad} if $A$ is a small type, we have $\judg{\overbar{\Gamma}}{\overbar{u}}{\overbar{A}}{\Gamma}{u}{A}$. We conclude that $\overbar{\Gamma} \vdash (\overbar{t} ~\overbar{u}) : \overbar{B}[x \mapsto \overbar{u}]$ with $\overbar{B}[x \mapsto \overbar{u}] ~\trad ~B[x \mapsto u]$ (since $\trad$ preserves substitution). \\

\item \textsc{Conv:}
\begin{mathpar}
\inferrule*{\Gamma \vdashr t : A \\ (\Gamma \vdashr A : s) \equiv (\Gamma \vdashr B : s)}{\Gamma \vdashr t : B}
\end{mathpar}

Suppose that there exists $\judgcont{\overbar{\Gamma}}{\Gamma}$.

As we consider small derivations, either $A$ is a small type or $A$ and $B$ are the same type.

If $A$ is a small type, then by induction hypothesis we have $\pack{\overbar{\Gamma}}{\overbar{\Gamma}} \vdash p : \eqt{\overbar{A}[\gamma_1]}{\overbar{B}[\gamma_2]}$. By \cref{lemma_2gamma} we obtain $\overbar{\Gamma} \vdash p[\gamma_{12}] : \eqt{\overbar{A}}{\overbar{B}}$. By \cref{lemma_2trad} and induction hypothesis we have $\judg{\overbar{\Gamma}}{\overbar{t}}{\overbar{A}}{\Gamma}{t}{A}$. Thanks to \cref{lemma_transp}, there exists some $\overbar{t}'$ such that $\judg{\overbar{\Gamma}}{\overbar{t}'}{\overbar{B}}{\Gamma}{t}{B}$.

If $A$ and $B$ are the same type, then no conversion is needed and the result is simply given the induction hypothesis $\overbar{\Gamma} \vdash \overbar{t} : \overbar{A}$. \\

\item \textsc{ConvRefl:}
\begin{mathpar}
\inferrule*{\Gamma \vdashr u : A}{(\Gamma \vdashr u : A) \equiv (\Gamma \vdashr u : A)}
\end{mathpar}

Suppose that there exists $\judgcont{\overbar{\Gamma}}{\Gamma}$. By induction hypothesis, we have $\judg{\overbar{\Gamma}}{\overbar{u}}{\overbar{A}}{\Gamma}{u}{A}$.

If $\Gamma \vdashr A : \Type$, then we build $\pack{\overbar{\Gamma}}{\overbar{\Gamma}} \vdash p : \heq{\overbar{u}[\gamma_1]}{}{\overbar{u}[\gamma_2]}{}$ using all the congruence rules of $\approx$.

We proceed similarly for the case $A = \Type$. \\

\item \textsc{ConvSymm:}
\begin{mathpar}
\inferrule*{(\Gamma \vdashr u : A) \equiv (\Gamma \vdashr v : B)}{(\Gamma \vdashr v : B) \equiv (\Gamma \vdashr u : A)}
\end{mathpar}

We use the induction hypothesis and the axiom $\sym$. \\

\item \textsc{ConvTrans:}
\begin{mathpar}
\inferrule*{(\Gamma \vdashr u : A) \equiv (\Gamma \vdashr v : B) \\ (\Gamma \vdashr v : B) \equiv (\Gamma \vdashr w : C)}{(\Gamma \vdashr u : A) \equiv (\Gamma \vdashr w : C)}
\end{mathpar}

Suppose that there exists $\judgcont{\overbar{\Gamma}}{\Gamma}$.

We are going to use an intermediary lemma, stating that if $(\Gamma \vdashr u : A) \equiv (\Gamma \vdashr v : B)$, then $\Gamma \vdashr A : \Type$ if and only if $\Gamma \vdashr B : \Type$, and $A = \Type$ if and only if $B = \Type$. This lemma can be proved by induction.

If $\Gamma \vdashr A : \Type$ and $\Gamma \vdashr C : \Type$, then $\Gamma \vdashr B : \Type$. By induction hypotheses, there exist $p_{AB}$ and $p_{BC}$ such that $\pack{\overbar{\Gamma}}{\overbar{\Gamma}} \vdash p_{AB} : \heq{\overbar{u}[\gamma_1]}{\overbar{A}[\gamma_1]}{\overbar{v}[\gamma_2]}{\overbar{B}[\gamma_2]}$ and $\pack{\overbar{\Gamma}}{\overbar{\Gamma}} \vdash p_{BC} : \heq{\overbar{v}'[\gamma_1]}{\overbar{B}'[\gamma_1]}{\overbar{w}[\gamma_2]}{\overbar{C}[\gamma_2]}$. Using \cref{lemma_t1t2}, there exists some $p$ such that $\pack{\overbar{\Gamma}}{\overbar{\Gamma}} \vdash p : \heq{\overbar{v}[\gamma_2]}{\overbar{B}[\gamma_2]}{\overbar{v}'[\gamma_1]}{\overbar{B}'[\gamma_1]}$. We conclude using transitivity.

The case $A = C = \Type$ is treated similarly. \\

\item \textsc{ConvDecl:}
\begin{mathpar}
\inferrule*[right={$x \notin \Gamma_1, \Gamma_2$}]{(\vdashr \Gamma_1) \equiv (\vdashr \Gamma_2) \\ (\Gamma_1 \vdashr A_1 : s) \equiv (\Gamma_2 \vdashr A_2 : s)}{(\vdashr \Gamma_1, x : A_1) \equiv (\vdashr \Gamma_2, x : A_2)}
\end{mathpar}

Suppose that there exist $\judgcont{\overbar{\Gamma}_1, x : \overbar{A}_1}{\Gamma_1, x : A_1}$ and $\judgcont{\overbar{\Gamma}_2, x : \overbar{A}_2}{\Gamma_2, x : A_2}$. Then we have $\judgcont{\overbar{\Gamma}_1}{\Gamma_1}$ and $\judgcont{\overbar{\Gamma}_2}{\Gamma_2}$, and by induction hypothesis $\pack{\overbar{\Gamma}_1}{\overbar{\Gamma}_2}$ is well formed. We conclude using \textsc{Decl} with $\pack{\overbar{\Gamma}_1}{\overbar{\Gamma}_2} \vdash x_1 : \overbar{A}_1[\gamma_1]$, with $\pack{\overbar{\Gamma}_1}{\overbar{\Gamma}_2} \vdash x_2 : \overbar{A}_2[\gamma_2]$, and with $\pack{\overbar{\Gamma}_1}{\overbar{\Gamma}_2} \vdash p_x : \heq{x_1}{\overbar{A}_1[\gamma_1]}{x_2}{\overbar{A}_2[\gamma_2]}$. \\

\item \textsc{ConvConst:}
\begin{mathpar}
\inferrule*[right={$c : A \in \Sigma$}]{(\vdashr \Gamma_1) \equiv (\vdashr \Gamma_2) \\ \vdashr A : s}{(\Gamma_1 \vdashr c : A) \equiv (\Gamma_2 \vdashr c : A)}
\end{mathpar}

Suppose that there exist $\judgcont{\overbar{\Gamma}_1}{\Gamma_1}$ and $\judgcont{\overbar{\Gamma}_2}{\Gamma_2}$. By induction hypothesis $\pack{\overbar{\Gamma}_1}{\overbar{\Gamma}_2}$ is well-formed.

If $c : A \in \Sigma_{pre}$, then $\overbar{A} \coloneqq A$, and if $c : A \in \Sigma_{\T}$, then we have $c : \overbar{A} \in \overbar{\Sigma}_\T$ for some $\overbar{A} ~\trad ~A$. Using the \textsc{Const} rule we derive that  $\overbar{\Gamma}_1 \vdash c : \overbar{A}$, that  $\overbar{\Gamma}_2 \vdash c : \overbar{A}$ and that $\pack{\overbar{\Gamma}_1}{\overbar{\Gamma}_2} \vdash c : \overbar{A}$. We obtain $\pack{\overbar{\Gamma}_1}{\overbar{\Gamma}_2} \vdash \refl_{\overbar{A}} ~c : \heq{c}{\overbar{A}}{c}{\overbar{A}}$. \\

\item \textsc{ConvVar:}
\begin{mathpar}
\inferrule*[right={$x : A_1 \in \Gamma_1, x : A_2 \in \Gamma_2$}]{(\vdashr \Gamma_1) \equiv (\vdashr \Gamma_2)}{(\Gamma_1 \vdashr x : A_1) \equiv (\Gamma_2 \vdashr x : A_2)}
\end{mathpar}

Suppose that there exist $\judgcont{\overbar{\Gamma}_1}{\Gamma_1}$ and $\judgcont{\overbar{\Gamma}_2}{\Gamma_2}$. By induction hypothesis $\pack{\overbar{\Gamma}_1}{\overbar{\Gamma}_2}$ is well-formed.

We have $x : A_1 \in \Gamma_1$ and $x : A_2 \in \Gamma_2$. Therefore we have $x_1 : \overbar{A}_1[\gamma_1] \in \pack{\overbar{\Gamma}_1}{\overbar{\Gamma}_2}$ and $x_2 : \overbar{A}_2[\gamma_2] \in \pack{\overbar{\Gamma}_1}{\overbar{\Gamma}_2}$. By construction we have the variable $p_x : \heq{x_1}{\overbar{A}_1[\gamma_1]}{x_2}{\overbar{A}_2[\gamma_2]}$ in the context $\pack{\overbar{\Gamma}_1}{\overbar{\Gamma}_2}$. We conclude using \textsc{Var}. \\

\item \textsc{ConvProd:}
\begin{mathpar}
\inferrule*{(\Gamma_1 \vdashr A_1 : \Type) \equiv (\Gamma_2 \vdashr A_2 : \Type) \\ (\Gamma_1, x : A_1 \vdashr B_1 : s) \equiv (\Gamma_2, x : A_2 \vdashr B_2 : s)}{(\Gamma_1 \vdashr \Pi x : A_1. ~B_1 : s) \equiv (\Gamma_2 \vdashr \Pi x : A_2. ~B_2 : s)}
\end{mathpar}

Suppose that there exist $\judgcont{\overbar{\Gamma}_1}{\Gamma_1}$ and $\judgcont{\overbar{\Gamma}_2}{\Gamma_2}$.

Since the derivation is a small derivation, either $\Pi x : A_1. ~B_1$ and $\Pi x : A_2. ~B_2$ are small types or are types of constants of $\Sigma_{pre}$.

If $\Pi x : A_1. ~B_1$ and $\Pi x : A_2. ~B_2$ are small types, then $A_1$, $B_1$, $A_2$ and $B_2$ are small types too. By induction hypotheses there exists $p_A$ such that $\pack{\overbar{\Gamma}_1}{\overbar{\Gamma}_2} \vdash p_A : \eqt{\overbar{A}_1[\gamma_1]}{\overbar{A}_2[\gamma_2]}$ and there exists $p_B$ such that $\pack{\overbar{\Gamma}_1}{\overbar{\Gamma}_2}, x_1 : \overbar{A}_1[\gamma_1], x_2 : \overbar{A}_2[\gamma_2], p_x : \heq{x_1}{}{x_2}{} \vdash p_B : \eqt{\overbar{B}_1[\gamma_1]}{\overbar{B}_2[\gamma_2]}$. If $\nu(A_1), \nu(A_2)$ and $\nu(B_1), \nu(B_2) \in \P$, we conclude using the appropriate $\prod$ axiom. The three other cases with grammars $\P$ and $\E$ are treated similarly.

If $\nu(A_1), \nu(A_2) \in \P$ or $\nu(A_1), \nu(A_2) \in \E$, but $\nu(B_1), \nu(B_2) \in \S$, or if $\nu(A_1), \nu(A_2) \in \S$ and $\nu(B_1), \nu(B_2) \in \P$ or $\nu(B_1), \nu(B_2) \in \E$, then we simply use the induction hypothesis. If $\nu(\Pi x : A_1. ~B_1),\nu(\Pi x : A_2. ~B_2) \in \S$, then we take $\pack{\overbar{\Gamma}_1}{\overbar{\Gamma}_2} \vdash \lambda P : \El ~\o. ~\lambda h : \Prf ~P. ~h : \true$.

If $\Pi x : A_1. ~B_1$ and $\Pi x : A_2. ~B_2$ are types of constants of $\Sigma_{pre}$, we necessarily have $\Pi x : A_1. ~B_1 = \Pi x : A_2. ~B_2$ and no conversion is needed. \\

\item \textsc{ConvAbs:}
\begin{mathpar}
\inferrule*{(\Gamma_1 \vdashr A_1 : \Type) \equiv (\Gamma_2 \vdashr A_2 : \Type) \\ (\Gamma_1, x : A_1 \vdashr B_1 : s) \equiv (\Gamma_2, x : A_2 \vdashr B_2 : s) \\ (\Gamma_1, x : A_1 \vdashr t_1 : B_1) \equiv (\Gamma_2, x : A_2 \vdashr t_2 : B_2)}{(\Gamma_1 \vdashr \lambda x : A_1. ~t_1 : \Pi x : A_1. ~B_1) \equiv (\Gamma_2 \vdashr \lambda x : A_2. ~t_2 : \Pi x : A_2. ~B_2)}
\end{mathpar}

Suppose that there exist $\judgcont{\overbar{\Gamma}_1}{\Gamma_1}$ and $\judgcont{\overbar{\Gamma}_2}{\Gamma_2}$.

Since the derivation is a small derivation, then $\Pi x : A_1. ~B_1$ and $\Pi x : A_2. ~B_2$ are necessarily small types. By induction hypotheses there exists $p_A$ such that $\pack{\overbar{\Gamma}_1}{\overbar{\Gamma}_2} \vdash p_A : \eqt{\overbar{A}_1[\gamma_1]}{\overbar{A}_2[\gamma_2]}$, and there exists $p_t$ such that $\pack{\overbar{\Gamma}_1}{\overbar{\Gamma}_2}, x_1 : \overbar{A}_1[\gamma_1], x_2 : \overbar{A}_2[\gamma_2], p_x : \heq{x_1}{}{x_2}{} \vdash p_t : \heq{\overbar{t}_1[\gamma_1, x \mapsto x_1]}{}{\overbar{t}_2[\gamma_2, x \mapsto x_2]}{}$. We conclude using the appropriate $\fun$. \\

\item \textsc{ConvApp:}
\begin{mathpar}
\inferrule*{(\Gamma_1 \vdashr t_1 : \Pi x : A_1. ~B_1) \equiv (\Gamma_2 \vdashr t_2 : \Pi x : A_2. ~B_2) \\ (\Gamma_1 \vdashr u_1 : A_1) \equiv (\Gamma_2 \vdashr u_2 : A_2)}{(\Gamma_1 \vdashr t_1 ~u_1 : B_1[x \mapsto u_1]) \equiv (\Gamma_2 \vdashr t_2 ~u_2 : B_2[x \mapsto u_2])}
\end{mathpar}

Suppose that there exist $\judgcont{\overbar{\Gamma}_1}{\Gamma_1}$ and $\judgcont{\overbar{\Gamma}_2}{\Gamma_2}$.

Since the derivation is a small derivation, either $\Pi x : A_1. ~B_1$ and $\Pi x : A_2. ~B_2$ are small types or are types of constants of $\Sigma_{pre}$.

Suppose that $\Pi x : A_1. ~B_1$ and $\Pi x : A_2. ~B_2$ are small types. By induction hypothesis there exists $p_t$ such that $\pack{\overbar{\Gamma}_1}{\overbar{\Gamma}_2} \vdash p_t : \heq{\overbar{t}_1[\gamma_1]}{}{\overbar{t}_2[\gamma_2]}{}$. By induction hypothesis and by \cref{lemma_2trad}, there exists $p_u$ such that $\pack{\overbar{\Gamma}_1}{\overbar{\Gamma}_2} \vdash p_u : \heq{\overbar{u}_1[\gamma_1]}{\overbar{A}_1[\gamma_1]}{\overbar{u}_2[\gamma_2]}{\overbar{A}_2[\gamma_2]}$. We conclude using $\app$.

Suppose that $t_1 = t_2 = \El$ or $t_1 = t_2 = \Prf$. The result is directly the induction hypothesis on $(\Gamma_1 \vdashr u_1 : A_1) \equiv (\Gamma_2 \vdashr u_2 : A_2)$.

Suppose that $t_1 = t_2 = \blpi$. We conclude using the induction hypothesis on $(\Gamma_1 \vdashr u_1 : A_1) \equiv (\Gamma_2 \vdashr u_2 : A_2)$, using $\refl ~\blpi$ and using $\app$. \\

\item \textsc{ConvBeta:}
\begin{mathpar}
\inferrule*{\Gamma \vdashr A : \Type \\ \Gamma, x : A \vdashr t : B \\ \Gamma, x : A \vdashr B : s \\ \Gamma \vdashr u : A}{(\Gamma \vdashr (\lambda x : A. ~t) ~u : B[x \mapsto u]) \equiv (\Gamma \vdashr t[x \mapsto u] : B[x \mapsto u])}
\end{mathpar}

Suppose that there exists $\judgcont{\overbar{\Gamma}}{\Gamma}$. By the induction hypotheses (and by \cref{lemma_2trad} if $A$ and $B$ are small types), we have $\judg{\overbar{\Gamma}}{\overbar{A}}{\Type}{\Gamma}{A}{\Type}$ and
$\judg{\overbar{\Gamma}, x : \overbar{A}}{\overbar{t}}{\overbar{B}}{\Gamma, x : A}{t}{B}$ and
$\judg{\overbar{\Gamma}, x : \overbar{A}}{\overbar{B}}{s}{\Gamma, x : A}{B}{s}$ and
$\judg{\overbar{\Gamma}}{\overbar{u}}{\overbar{A}}{\Gamma}{u}{A}$.
Using all the congruence rules of $\approx$ we have $\heq{((\lambda x : \overbar{A}. ~\overbar{t}) ~\overbar{u})[\gamma_1]}{}{((\lambda x : \overbar{A}. ~\overbar{t}) ~\overbar{u})[\gamma_2]}{}$. Using $\equiv_{\beta\Sigma_{pre}}$, we get $\heq{((\lambda x : \overbar{A}. ~\overbar{t}) ~\overbar{u})[\gamma_1]}{}{(\overbar{t}[x \mapsto \overbar{u}])[\gamma_2]}{}$. \\

\item \textsc{ConvRule:}
\begin{mathpar}
\inferrule*[right={$\ell \lra r \in \Sigma$}]{\vec{x} : \vec{B} \vdashr \ell : A \\ \vec{x} : \vec{B} \vdashr r : A \\ \Gamma \vdashr \vec{t} : \vec{B}}{(\Gamma \vdashr \ell[\vec{x} \mapsto \vec{t}] : A[\vec{x} \mapsto \vec{t}]) \equiv (\Gamma \vdashr r[\vec{x} \mapsto \vec{t}] : A[\vec{x} \mapsto \vec{t}])}
\end{mathpar}

Suppose that there exists $\judgcont{\overbar{\Gamma}}{\Gamma}$.

Suppose that we are in the case $\ell \lra r \in \Sigma_{pre}$. For example, consider the rewrite rule $\El ~(x \arrd y) \lra \Pi z : \El ~x. ~\El ~(y ~z)$ instantiated with $x \coloneqq u$ and $y \coloneqq v$. We have $x : \Set, y : \El ~x \ra \Set \vdash \El ~(x \arrd y) : \Type$ and $x : \Set, y : \El ~x \ra \Set \vdash \Pi z : \El ~x. ~\El ~(y ~z) : \Type$. By induction and \cref{lemma_2trad} we have $\judg{\overbar{\Gamma}}{\overbar{u}}{\Set}{\Gamma}{u}{Set}$ and $\judg{\overbar{\Gamma}}{\overbar{v}}{\El ~\overbar{u} \ra \Set}{\Gamma}{v}{\El ~u \ra Set}$. Since $\nu(\El ~(u \arrd v)) = \nu(\Pi z : \El ~u. ~\El ~(v ~z)) = u \arrd v$, we need to build $\heq{(\overbar{u} \arrd \overbar{v})[\gamma_1]}{}{(\overbar{u} \arrd \overbar{v})[\gamma_2]}{}$ in the context $\pack{\overbar{\Gamma}}{\overbar{\Gamma}}$. We build it using the congruence rules of $\approx$. The other cases are treated similarly.

Suppose that we are in the case $\ell \lra r \in \Sigma_\T$ with one free variable $x$ of type $B$. Then we have some $\mathsf{eq}_{\ell r} : \Pi x : \overbar{B}. ~\heq{\overbar{\ell}}{\overbar{A}}{\overbar{r}}{\overbar{A}}$ in the signature $\overbar{\Sigma}_\T$. By the third induction hypothesis and \cref{lemma_2trad}, we have $\judg{\overbar{\Gamma}}{\overbar{t}}{\overbar{B}}{\Gamma}{t}{B}$. By weakening and substitution, we have $\pack{\overbar{\Gamma}}{\overbar{\Gamma}} \vdash \overbar{t}[\gamma_1] : \overbar{B}$. We obtain $\pack{\overbar{\Gamma}}{\overbar{\Gamma}} \vdash \mathsf{eq}_{\ell r} ~\overbar{t}[\gamma_1] : \heq{\overbar{\ell}[x \mapsto \overbar{t}[\gamma_1]]}{}{\overbar{r}[x \mapsto \overbar{t}[\gamma_1]]}{}$. From the congruence rules of $\approx$ we build some $\heq{\overbar{r}[x \mapsto \overbar{t}[\gamma_1]]}{}{\overbar{r}[x \mapsto \overbar{t}[\gamma_2]]}{}$. We conclude by transitivity. The same proof can be done if we have multiple variables. \\

\item \textsc{ConvConv:}
\begin{mathpar}
\inferrule*{\Gamma \vdashr u  : A \\ (\Gamma \vdashr A : s) \equiv (\Gamma \vdashr B : s)}{(\Gamma \vdashr u  : A) \equiv (\Gamma \vdashr u : B)}
\end{mathpar}

Suppose that there exists $\judgcont{\overbar{\Gamma}}{\Gamma}$.

Since the derivation is a small derivation, either $A$ and $B$ are small types or are types of constants of $\Sigma_{pre}$.

If $A$ and $B$ are small types, then by the induction hypotheses, \cref{lemma_2trad} and \cref{lemma_2gamma} we have $\judg{\overbar{\Gamma}}{\overbar{u}}{\overbar{A}}{\Gamma}{u}{A}$ and $\overbar{\Gamma} \vdash p_{AB} : \eqt{\overbar{A}}{\overbar{B}}$. We conclude using \cref{lemma_transp}.

If $A$ and $B$ are types of constants of $\Sigma_{pre}$, we necessarily have $A = B$. Therefore we take $\pack{\overbar{\Gamma}}{\overbar{\Gamma}} \vdash \refl ~u : \heq{u}{A}{u}{A}$. The result works because the variables of $\overbar{\Gamma}$ do not appear in $u$ and $A$ so $u[\gamma_i] \coloneqq u$ and $A[\gamma_i] \coloneqq A$ for $i \in \{ 1, 2 \}$. \\
\end{itemize}

We just proved the five items assuming the existence of $\overbar{\Sigma}_{\T} ~\trad ~\Sigma_{\T}$. We now need to prove the existence of $\overbar{\Sigma}_{\T}$. We write $\Sigma^i_{\T}$ for the signature composed of the $i$-th first elements of $\Sigma_{\T}$. Formally, if $c : A$ is the $i+1$-th element of $\Sigma_{\T}$, then we have $\vdashr A : s$ in the theory $\Sigma_{pre} \cup \Sigma^i_{\T}$, and if $\ell \lra r$ is the $i+1$-th element of $\Sigma_{\T}$, then we have $\vec{x} : \vec{B} \vdashr \ell : A$ and $\vec{x} : \vec{B} \vdashr r : A$ in the theory $\Sigma_{pre} \cup \Sigma^i_{\T}$, for some $\vec{B}$ and $A$. We translate the signature $\Sigma_{\T}$ by induction:
\begin{itemize}
\item The translation of $\langle \rangle$ is simply $\langle \rangle$.
\item Suppose that we have the signature $\Sigma^i_{\T}, c : A$. By induction we get a translation $\overbar{\Sigma}^i_{\T} ~\trad ~\Sigma^i_{\T}$. We use the first step of the proof to get $\overbar{A} ~\trad ~A$ such that $\vdash \overbar{A} : s$. We have $(\Sigma^i_{\T}, c : \overbar{A}) ~\trad ~(\Sigma^i_{\T}, c : A)$. 
\item Suppose that we have the signature $\Sigma^i_{\T}, \ell \lra r$. By induction we get a translation $\overbar{\Sigma}^i_{\T} ~\trad ~\Sigma^i_{\T}$. We use the first step of the proof and \cref{lemma_2trad} to get $\vec{x} : \overbar{\vec{B}} \vdashr \overbar{\ell} : \overbar{A}$ and $\vec{x} : \overbar{\vec{B}} \vdashr \overbar{r} : \overbar{A}$. We have $(\overbar{\Sigma}^i_{\T}, \mathsf{eq}_{\ell r} : \Pi x : \overbar{\vec{B}}. ~\heq{\overbar{\ell}}{\overbar{A}}{\overbar{r}}{\overbar{A}}) ~\trad ~(\Sigma^i_{\T}, \ell \lra r)$.
\end{itemize}
At the end, we obtain a signature $\overbar{\Sigma}_{\T}$ that is a translation of $\Sigma_{\T}$, that does not contain rewrite rules, and such that $\T^{ax} = \Sigma_{pre} \cup \Sigma_{eq} \cup \overbar{\Sigma}_{\T}$ is a theory.
\qed
\end{proof}

\end{document}